\definecolor{DarkGreen}{rgb}{0.1,0.5,0.1}
\definecolor{DarkRed}{rgb}{0.5,0.1,0.1}
\definecolor{DarkBlue}{rgb}{0.1,0.1,0.5}
\newcommand{\ba}{\bm{a}}
\newcommand{\bx}{\bm{x}}
\newcommand{\by}{\bm{y}}
\newcommand{\bg}{\bm{g}}
\newcommand{\bbeta}{\bm{\beta}}
\newcommand{\tgt}{\tilde{\gamma}_t}
\newcommand{\EE}{\mathbb{E}}
\newcommand{\norm}[1]{\left\|#1\right\|}
\newcommand{\twonorm}[1]{\left\|#1\right\|_2}
\newcommand{\decnorm}[2]{\left\|#1\right\|_{#2}}
\newcommand{\fronorm}[1]{\decnorm{#1}{F}}
\newcommand{\ip}[2]{\ensuremath{\left\langle #1,#2\right\rangle}}
\newcommand{\inset}[1]{\left\{#1\right\}}
\newcommand{\inparen}[1]{\left(#1\right)}
\newcommand{\ind}[1]{\ensuremath{\mathbf{1}_{#1}}}
\newcommand{\eps}{\varepsilon}
\renewcommand{\epsilon}{\varepsilon}
\newcommand{\mkw}[1]{{\footnotesize  [\textbf{\textcolor{blue}{#1}} \textcolor{blue!60!black}{--mary}]\normalsize}}
\newcommand{\TODO}[1]{\textcolor{red}{\textbf{TODO: #1}}}
\newtheorem{theorem}{Theorem} 
\newtheorem{definition}{Definition}
\newtheorem{corollary}[theorem]{Corollary}
\pgfplotsset{
    legend image with text/.style={
        legend image code/.code={%
            \node[anchor=center] at (0.3cm,0cm) {#1};
        }
    },
}
\def\erasurehead{{\sc ErasureHead}}
\def \bes{\begin{equation*}}
\def \ees{\end{equation*}}
\def \bas{\begin{align*}}
\def \eas{\end{align*}}
\def \be{\begin{equation}}
\def \ee{\end{equation}}
\def \bbm{\begin{bmatrix}}
\def \ebm{\end{bmatrix}}
\newcommand{\grad}[2]{ \nabla \mathcal{L}(\mathbf{a}_#1, \mathbf{\bbeta}_#2)}
\def \ghatt {\hat{\mathbf{g}}_t}
\def \tZ {\tilde{Z}}
\def \nbiter{\nu}
\newcommand{\rawad}[1]{{\footnotesize  [\textbf{\textcolor{green!50!black}{#1}} \textcolor{green!30!black}{--rawad}]\normalsize}}
\definecolor{bleudefrance}{rgb}{0.19, 0.55, 0.91}
\def \w{{\fontfamily{cmtt}\selectfont \text{W}}}
\newcolumntype{C}[1]{>{\centering\let\newline\\\arraybackslash\hspace{0pt}}m{#1}}
\newtheorem*{theorem*}{Theorem}
\begin{document}

\newlength\figureheight
\newlength\figurewidth

\title{Stochastic Gradient Coding for  Straggler Mitigation in Distributed Learning}
\author{
\IEEEauthorblockN{Rawad Bitar\IEEEauthorrefmark{1}, Mary Wootters\IEEEauthorrefmark{2}, and Salim El Rouayheb\IEEEauthorrefmark{1}}\\
\IEEEauthorblockA{\small\IEEEauthorrefmark{1}ECE department, Rutgers University} \and 
\IEEEauthorblockA{\small\IEEEauthorrefmark{2}CS and EE departments, Stanford University}\\
\IEEEauthorblockA{\small rawad.bitar@rutgers.edu, marykw@stanford.edu, salim.elrouayheb@rutgers.edu}
 \thanks{
M. Wootters is partially funded by NSF grant CCF-1657049 and by NSF CAREER grant CCF-1844628. R. Bitar and S. El Rouayheb  were partially funded by NSF grant  CNS-1801708.} 
}

\maketitle

\begin{abstract}
We consider distributed gradient descent in the presence of stragglers.  Recent work on \em gradient coding \em and \em approximate gradient coding \em have shown how to add redundancy in distributed gradient descent to guarantee convergence even if some workers are \em stragglers\em---that is, slow or non-responsive.  In this work we propose an approximate gradient coding scheme called \em Stochastic Gradient Coding \em (SGC), which works when the stragglers are random.  SGC distributes data points redundantly to workers according to a pair-wise balanced design, and then simply ignores the stragglers.  We prove that the convergence rate of SGC mirrors that of batched Stochastic Gradient Descent (SGD) for the $\ell_2$ loss function, and show how the convergence rate can improve with the redundancy.  We also provide bounds for more general convex loss functions.  We show empirically that SGC requires a small amount of redundancy to handle a large number of stragglers and that it can outperform existing approximate gradient codes when the number of stragglers is large.
\end{abstract}

\section{Introduction}
We consider a distributed setting where a master wants to run a gradient-descent-like algorithm to solve an optimization problem distributed across several workers. Let $X \in \mathbb{R}^{m \times \ell}$ be a data matrix and let $\mathbf{x}_i \in \mathbb{R}^\ell$ denote the $i$'th row of $X$. 
Let $\mathbf{y} \in \mathbb{R}^m$ be a vector of labels, so $\bx_i$ has label $y_i$.
Define $A \triangleq [X|\mathbf{y}]$ to be the concatenation of $X$ and $\mathbf{y}$.
The master wants to find a vector $\bbeta^* \in \mathbb{R}^\ell$ that best represents the data $X$ as a function of the labels $\mathbf{y}$.
That is, the goal is to iteratively solve an optimization problem
\begin{equation}
\label{eq:optprob}
 \bbeta^* = \arg\min_{\bbeta} \mathcal{L}(A, \bbeta),
\end{equation}
for a given loss function $\mathcal{L}$, by simulating or approximating an update rule of the form
\begin{equation}\label{eq:gdupdate}
 \bbeta_{t+1} = \bbeta_t - \gamma_t \nabla \mathcal{L}(A, \bbeta_t). 
\end{equation}
Many natural loss functions 
$\mathcal{L}(A, \bbeta)$ can be written as the sum over individual rows 
$\mathbf{a}_i$ of $A$, i.e., 
\begin{equation}
\label{eq:gradient}
 \mathcal{L}(A, \bbeta) = \sum_{i=1}^m \mathcal{L}(\mathbf{a}_i, \bbeta),
\end{equation}
such loss functions lend themselves naturally to distributed algorithms.
In a distributed setting, 
the master partitions the data matrix $A$ into rows $\mathbf{a}_i$ which are distributed between the workers. Each worker returns some linear combination(s) of the gradients $\nabla \mathcal{L}(\mathbf{a}_i, \bbeta_t)$ that it can compute, and the master aggregates these together to compute or approximate the update step~\eqref{eq:gdupdate}.

We focus on the setting where some of the workers may be \em stragglers, \em i.e., slow or unresponsive.  This setting has been studied before in the systems community  \cite{DB13,chen2016revisiting,ananthanarayanan2010reining,zaharia2008improving}, and recently in the coding theory community \cite{tandon2017gradient, ye2018communication, lee2018speeding}. A typical approach is to introduce some redundancy: for example, the same piece of data $\mathbf{a}_i$ might be held by several workers. There are several things that one might care about in such a scheme and in this paper we focus on the following four desiderata:
\begin{itemize}
\item[(A)] \textbf{Convergence speed.} We would like the error $\|\bbeta_t - \bbeta^*\|_2$ to shrink as quickly as possible. 
\item[(B)] \textbf{Redundancy.} We would like to minimize the amount of storage and computation overhead needed between the workers.
\item[(C)] \textbf{Communication.} We would like to minimize the amount of communication between the master and the workers.
\item[(D)] \textbf{Flexibility.} In practice, there is a great deal of variability in the number of stragglers over time.  We would like an algorithm that degrades gracefully if more stragglers than expected occur. 
\end{itemize}


Much existing work has focused on simulating gradient descent \em exactly, \em even in the presence of worst-case stragglers, for example \cite{tandon2017gradient, ye2018communication, raviv2017gradient, lee2018speeding}. 
In that model, at each round an arbitrary set of $s$ workers (for a fixed $s$) may not respond to the master. The goal is for the master to obtain the same update $\bbeta_t$ at round $t$ that gradient descent would obtain. For this to happen, the master should be able to obtain an exact value of the gradient $\nabla\mathcal{L}(A, \bbeta_t)$.
This has given rise to (exact) \em gradient coding \em \cite{tandon2017gradient}, which focuses on optimizing desiderata (A) and (C) above.  However, these schemes (and necessarily, any scheme in this model) do not do so well on (B) and (D).  First, it is not hard to see that in the presence of $s$ worst-case stragglers, it is necessary for any $n-s$ workers to be able to recover all of the data, which necessitates a certain amount of overhead. Namely, every data vector should be replicated on $s+1$ different workers.  Second, the gradient coding schemes for example in \cite{tandon2017gradient,ye2018communication} are brittle in the sense that they work perfectly for $s$ failures, but cannot handle more than $s$ stragglers.  

On the other hand, there has also been work on \em approximately \em simulating gradient descent.  
A first approach in this direction (similar in spirit to the method in \cite{chen2016revisiting}) 
is to assume that the stragglers are random, rather than worst-case, and not employ any redundancy at all. Thus, the master obtains an approximate update \eqref{eq:gdupdate} instead of an exact one by computing the sum in \eqref{eq:gradient} without the responses of the stragglers. (We will later refer to this algorithm as ``Ignore--Stragglers--SGD.'')
If the stragglers are independent at each round, this algorithm is a close approximation to Batch--SGD, see e.g. \cite{bottou1998online,shalev2011stochastic,gimpel2010distributed,shalev2011pegasos}, and performs in about the same way. 
However, for convex loss functions it is well known that, while Batch--SGD does converge to $\bbeta^*$, the convergence is not as fast as that of classical gradient descent \cite{SGDopt,bottou2018optimization,boyd2004convex}.  Thus, this approach maintains the good communication cost (C) of the coded approaches by requiring each worker to send one linear combination of the gradients to the master, and improves on (B) and (D), but sacrifices (A), the convergence rate.

A line of work known as \em approximate gradient coding \em \cite{charles2017approximate, maity2018robust,raviv2017gradient, wang2019erasurehead,wang2019fundamental,horii2019distributed} introduces redundancy in order to speed up the convergence rate of such an approximate scheme.
This line of work studies the data redundancy $d$ (that is, the number of times each row $\mathbf{a}_i$ of the data matrix $A$ is replicated) needed to tolerate $s$ stragglers and allow the master to compute an approximation of the gradient if more than $s$ workers are stragglers \cite{charles2017approximate,raviv2017gradient, wang2019erasurehead,wang2019fundamental}. In \cite{maity2018robust} a variant of this idea is studied; in that work the data is encoded using LDPC code rather than being duplicated. 
In approximate gradient coding, the master is required to compute the exact gradient with high probability if fewer than $s$ workers are stragglers. If more than $s$ workers are stragglers, the distance between the computed gradient at the master and the true gradient can be made small if the redundancy factor is poly-logarithmic in the number of workers. So far, this line of work has mostly focused on desiderata (B), (C) and (D), and most works have not directly analyzed the convergence time (A).
Two exceptions are \cite{maity2018robust} and \cite{wang2019erasurehead}, which we discuss more below.

In this work, we introduce an approximate gradient coding scheme called \em Stochastic Gradient Coding \em (SGC) which works in the random straggler model and which does well simultaneously on desiderata (A)-(D).
We analyze the convergence rate of SGC, and we 
present experimental work which demonstrates that SGC 
outperforms the most recently proposed schemes \cite{charles2017approximate,wang2019erasurehead} when $p$ (the fraction of workers that the master will ignore in each iteration) is relatively large.\footnote{We note that rather than thinking of workers as being unresponsive with probability $p$, we are motivated by a setting where the master waits for the $1-p$ fastest fraction of the workers before proceeding to the next iteration.  This setting motivates the case where $p$ is relatively large, which is the setting we focus on in this work.}

\subsection{Contributions} 

We consider an approach that we call \em Stochastic Gradient Coding \em (SGC). The idea---which is similar to previous approaches in approximate gradient coding~\cite{charles2017approximate}---is simple: the master distributes data to the workers with a small amount of repetition according to a \em pair-wise balanced scheme \em (which we will define below); a data point $\mathbf{x}_i$ is replicated $d_i$ times, and $d_i$ can vary from data point to data point.  Below, the redundancy parameter $d$ refers to the average of the $d_i$'s.  
Once the data is distributed,
the algorithm proceeds similarly to the Ignore--Stragglers--SGD algorithm described above: workers compute gradients on their data and return a linear combination, and the master aggregates all of the linear combinations it receives to do an update step.  

One contribution of this work is to provide a rigorous convergence analysis of SGC.  We show that
SGC with only a small amount of redundancy $d$ is able to regain the benefit of (A) from the (exact) coded approaches, while still preserving the benefits of (B), (C), (D) that the ``Ignore--Stragglers--SGD'' approach sketched above does.  
A second contribution is extensive experimental evidence which suggests that for the same small redundancy factor $d$ SGC outperforms other schemes when there are many stragglers.

More precisely, our contributions are as follows (all in the stochastic straggler model):
\begin{itemize}
	\item In the special case of the $\ell_2$ loss function, we show that SGC with redundancy factor $d>1$, can obtain error bounds where $\|\bbeta^* - \bbeta_t\|_2$ decreases at first exponentially and then proportionally to $\frac{1}{td}$.  This mirrors existing results on SGD (which corresponds to the case $d=1$), and quantifies the trade-off between replication and error.  This is made formal in Theorem~\ref{thm:convl2}.
	\item For more general loss functions, we show that SGC has at least the same convergence rate as Ignore--Stragglers--SGD, and we give some theoretical evidence that the error $\|\bbeta^* - \bbeta_t\|$ may decrease as $d$ increases.  This is made formal in Theorem~\ref{thm:conv}. 
	\item We provide numerical simulations comparing SGC to gradient descent, Ignore--Stragglers--SGD and a few other versions of SGD, and other approximate gradient coding methods.  Our simulations show that indeed SGC improves the accuracy of Ignore--Stragglers--SGD, with far less redundancy than would be required to implement exact gradient descent using coding. In addition, we compare SGC to other approximate gradient methods existing in the literature and show that SGC outperforms the existing methods when the probability of workers being stragglers is high.
	\end{itemize}

\subsection{Relationship to previous work on approximate gradient coding}

We provide a more detailed description of previous work in Section~\ref{sec:related}, but first we briefly mention some of the main differences between our work and existing work on approximate gradient coding \cite{charles2017approximate, maity2018robust,raviv2017gradient, wang2019erasurehead,wang2019fundamental,horii2019distributed}.

First, we note that our SGC scheme is quite similar to Bernoulli Gradient Coding (BGC) studied in \cite{charles2017approximate}, where the data is distributed uniformly at random to $d$ workers.  One difference between our work and that work is that we allow for the redundancy of different data points $\mathbf{x}_i$ to vary for different $i$; we will see that for the $\ell_2$ loss function it makes sense to choose $d_i$ based on $\|\mathbf{x}_i\|_2.$  A second difference between our work and \cite{charles2017approximate} is that \cite{charles2017approximate} does not provide a complete convergence analysis.  {The works \cite{raviv2017gradient,wang2019fundamental,horii2019distributed} also study schemes similar in flavor to SGC, but these works also do not provide  complete convergence analyses.}

%

The works of \cite{maity2018robust, wang2019erasurehead} do provide convergence analyses, although for schemes that are quite different from SGC.  More precisely, \cite{maity2018robust} studies a scheme with LDPC coding, rather than repetition.  The work of \cite{wang2019erasurehead} studies a scheme based on Fractional Repetition (FR) codes, which was proposed in \cite{charles2017approximate}.  This scheme partitions the data and the workers into different blocks; every worker in a block receives all of the data from the corresponding block.

Additionally, we obtain slightly different error guarantees than the analyses of \cite{maity2018robust, wang2019erasurehead}.  More precisely, the analysis of \cite{wang2019erasurehead} proves a bound where the error decreases exponentially in $T$ (the number of iterations of the algorithm) until some noise floor is hit.  The analysis of \cite{maity2018robust} studies the special case of the $\ell_2$ loss function, and shows that the error decays like $\mathcal{O}(1/\sqrt{T})$.  In contrast, for SGC and for the special case of the $\ell_2$ loss function, we show that the error decays exponentially in $T$ at first and then switches to dacaying like $\mathcal{O}(1/T)$; this mirrors existing results for SGD for the $\ell_2$ loss function.  We give a more general result that holds for general convex loss functions and show that the error decays as $\mathcal{O}(1/T)$.  

Finally, we provide empirical results which suggest that our scheme can outperform existing gradient coding schemes (in particular, the FR-based approach of \cite{charles2017approximate,wang2019erasurehead} and BGC~\cite{charles2017approximate}) in some parameter regimes.  We do not compare our scheme empirically to that of \cite{maity2018robust,horii2019distributed} because they requires more work on the master's end (to encode and decode) and are thus not directly comparable to our work.

\subsection{Organization}
We give a more precise definition of our set-up in Section~\ref{sec:model}.
We describe the SGC algorithm in Section~\ref{sec:SGC}.
In Section~\ref{sec:results}, we give a more detailed overview of both our theoretical and empirical results, which are fleshed out in Sections~\ref{sec:convergence} and \ref{sec:sims} respectively.  The proofs of our results can be found in Section~\ref{sec:proofs}.  We provide more detail on related work in Section~\ref{sec:related}.

\section{Setup}\label{sec:model}
\subsection{Probabilistic model of stragglers}
In this paper, we adopt a probabilistic model of stragglers.  More precisely, we assume that at every iteration each worker may be a straggler with some probability $p$, and this is independent between workers and between iterations.  This is a strong assumption, but it is a natural starting place.%
\footnote{In our numerical simulations, we relax the assumption of independence and show that similar results hold when the identities of the stragglers are somewhat persistent from round to round and change only after a fixed number of iterations.}
Our probabilistic model is similar to the model in  \cite{charles2017approximate, raviv2017gradient, wang2019erasurehead, wang2019fundamental,maity2018robust, horii2019distributed} and is in contrast to the worst-case model assumed by much of the literature on coded computation. 

\subsection{Computational model}
Our computational model has two stages, a  distribution stage and a computation stage.

In the \em  distribution stage, \em the master decides which data to send to each worker.  More precisely, the master can decide to send each row $\mathbf{a}_i$ of $A$ to $d_i$ different workers.  We refer to the parameter $d = \frac{1}{m} \sum_{i=1}^m d_i$ as the \em redundancy \em  of the scheme.

The \em computation stage \em is made up of rounds, each of which contains two repeating steps.  In the first step, the master does some local computation and then sends a message to each worker.  In the second step, each worker does some local computation and tries to send a message back to the master; however, with probability $p$ the message may not reach the master.  Then the round is over and the master repeats the first step to begin the next round.  We refer to the total amount of communication per round as the \em communication cost \em of the scheme.

\section{Stochastic gradient coding}\label{sec:SGC}
In this section, we describe our solution, which we call \em Stochastic Gradient Coding \em (SGC).  The idea behind SGC is extremely simple.  It is very much like the Ignore--Stragglers--SGD algorithm described above, except we introduce a small amount of redundancy.  We describe the distribution stage and the computation stage of our algorithm below.  Our scheme has parameters $d_1, \ldots, d_m$, which control the redundancy of each row, and a parameter $\gamma_t$ which controls the step size.  We will see in the theoretical and numerical analyses how to set these parameters.

In our analysis, we focus on \em pair-wise balanced schemes: \em
\begin{definition}
We say that a distribution scheme that sends $\mathbf{a}_i$ to $d_i$ different workers is \em pair-wise balanced \em if for all $i \neq i'$, the number of workers that receives $\mathbf{a}_i$ and $\mathbf{a}_{i'}$ is $\frac{d_i d_{i'}}{n}$.
\end{definition}
Notice that with a completely random distribution scheme, the expected number of workers who receive both $\mathbf{a}_i$ and $\mathbf{a}_{i'}$ for $i \neq i'$ is equal to $\frac{ d_i d_{i'} } {n}$.
In our analysis, it is convenient to deal with schemes that are exactly pair-wise balanced.  However, for small $d_i$ it is clear that no such schemes exist (indeed, we may have $\frac{d_i d_{i'}}{n} < 1$).  In our simulations, we choose a uniformly random scheme\footnote{In our simulations, we assign rows to $d_i$ workers uniformly at random, which approximates a pair-wise balanced scheme. Similarly, the BGC construction of \cite{charles2017approximate} approximates a pair-wise balanced scheme where each row is assigned to $d$ workers uniformly at random, i.e., $d_i=d$ for all $i\in [m]$.} which seems to work well (see Section~\ref{sec:sims}).  We believe that our analysis should extend to a random assignment as well, although for simplicity we focus on pair-wise balanced schemes in our theoretical results.

The way SGC works is as follows:
\begin{itemize}
\item \textbf{Distribution Stage}.
The master creates $d_i$ copies of each row $\mathbf{a}_i$, $i=1,\dots,m$, and sends them to $d_i$ distinct workers according to a pair-wise balanced scheme.
We denote by $S_j$, $j=1,\dots,n$, the set of indices of the data vectors given to worker $\w_j$, i.e., $S_j = \{i; \mathbf{a}_i \text{ is given to } \w_j\}$.
\item \textbf{Computation Stage.}
At each iteration $t$, the master sends $\mathbf{\bbeta}_t$ to all the workers. Each worker $\w_j$ computes 
\begin{equation}\label{eq:sumw}
f_j (\bbeta_t) \triangleq \gamma_t \sum_{i\in S_j} \dfrac{1}{d_i(1-p)}\nabla \mathcal{L}(\mathbf{a}_i, \mathbf{\bbeta}_t)
\end{equation}
and sends the result to the master. The master aggregates all the received answers from non straggler workers, sums them and updates $\bbeta$ as follows:
\bes
\bbeta_{t+1} = \bbeta_t - \gamma_t\sum_{j=1}^n \sum_{i=1}^m \dfrac{\mathcal{I}_i^j}{d_i(1-p)} \grad{i}{t},
\ees
where $\mathcal{I}_i^j$  is the indicator function for worker $j$ being non straggler and having obtained point $\mathbf{a}_i$ during the data distribution, i.e., 
\bes
\mathcal{I}^j_i = 
\begin{cases}
1 & \hfill \text{if worker $j$ is non straggler and has point }\mathbf{a}_i,\\
0 & \hfill \text{otherwise}.
\end{cases}
\ees
Note that  $\mathcal{I}_i^j$    depends on the iteration $t$, however we drop $t$ from the notation for notational convenience  since the value of  $t$ will be clear from the context.
\end{itemize}
For use below, we define 
\begin{equation}\label{eq:ghat}
\hat{\mathbf{g}}_t\triangleq\sum_{j=1}^n \sum_{i=1}^m \dfrac{\mathcal{I}_i^j}{d_i(1-p)} \grad{i}{t}.
\end{equation}
\noindent We call $\hat{\mathbf{g}}_t$ the estimate of the gradient at iteration $t$ which estimates the exact gradient of the loss function in \eqref{eq:gradient}, $$\mathbf{g}_t \triangleq \sum_{i=1}^m \grad{i}{t}.$$

\section{Summary of our main results}\label{sec:results}

In this section, we summarize both our theoretical and numerical results.

\subsection{Theoretical results}
Our main theoretical contributions are to derive results for SGC that mirror known results for SGD and Batch--SGD.
There are two important differences between our results and those for Batch--SGD.
\begin{enumerate}
\item First, one of our goals is to show how the error $\|\bbeta^* - \bbeta_t\|_2^2$ depends on the redundancy parameter $d$; we show that it is roughly like $1/d$.  This explains why SGC can work much better than Ignore--Stragglers--SGD (say, so that $\|\bbeta_t - \bbeta^*\|_2^2$ is half as large), even with relatively low redundancy (say, $d=2$).  In Batch--SGD we always have $d=1$. 
\item Second, it is nontrivial to adapt existing results for Batch--SGD to our setting.  The reason is that the batches are not uniform in our setting; rather, they depend on the way that the data is distributed.  We note that this is true even if the data is distributed randomly to begin with: in that case it is true that the marginals of the batches are uniformly random (that is, in each round the set of gradients that the master receives is a uniformly random subset of all of them) but because the randomness from the initial distribution is fixed throughout the computation, if we view it this way then the batches are no longer independent.  The main technical challenge in our analysis (in particular, the proof of Theorem~\ref{thm:informalsgcl2} below) is to deal with this issue.\footnote{We note that this is not an issue for our proof of Theorem~\ref{thm:informalsgc}, since we are able to adapt existing results that depend only on the mean and variance of the gradient estimates.}
\end{enumerate}

We adapt existing result from the SGD literature to prove a tighter bound that holds for arbitrary convex loss functions. And we derive a stronger convergence guarantee for the $\ell_2$ loss function.

\textbf{Special case: $\ell_2$ loss function.}
We begin with a result which is specialized for the $\ell_2$ loss function.  This result
is of a similar flavor as the results of \cite{SV09,NWS14,NW16} on SGD and the randomized Kaczmarz algorithm.\footnote{We note that \cite{NWS14} also holds for more general loss functions.}
Those works show that the speed of convergence is exponential to begin with, and then begins to decay polynomially like $1/t$ once an unavoidable limit is reached. 
In this work, we show an analogous result for the $\ell_2$ loss function.
in this case we show that the convergence is exponential to begin with, until the noise is on the order of $\ell_2$ normal of the residual $\mathbf{r} \triangleq X\bbeta^* - \mathbf{y}$, and then it begins to decay polynomially like $1/(dt)$. 

Thus, our analysis generalizes the case when $d=1$ (aka, Ignore--Stragglers--SGD), and we see that as the repetition factor $d$ increases, the error of SGC decreases.  We state our main theorem informally below, and we state the formal version in Section~\ref{sec:convergence}. Throughout the paper we abuse notation use the superscript $T$ to denote the transpose of a matrix.

\begin{theorem}[Informal; see Theorem~\ref{thm:convl2} for a formal version]\label{thm:informalsgcl2}
Consider an SGC algorithm run on a matrix $A\triangleq[X|\mathbf{y}]$ of dimension $m \times (\ell+1)$ distributed to $n$ workers.
Suppose that the distribution scheme is pairwise balanced, 
and that each row $\ba_i$ of $A = [X|\mathbf{y}]$ is sent to $d_i$ different workers, where $d_i$ is chosen proportional to $\|\bx_i\|_2^2$. 

Suppose than $n$ is sufficiently large and that 
\[ d = \frac{1}{m} \sum_{i=1}^m d_i \geq 8 \inparen{ \frac{p}{1-p} }. \]

Choose an error tolerance $\eps > 0$.
Then, it is possible to choose a step size $\gamma_t$ at each step $t$ so that the following guarantee holds on the iterates $\bbeta_T$ of SGC, for $T \geq 2\log(1/\eps^2)$:

\[ \EE \left[ \|\mathbf{\bbeta}_T - \mathbf{\bbeta}^*\|^2_2 \right] \leq \eps^2 \| \mathbf{\bbeta}_0 - \mathbf{\bbeta}^*\|^2_2 + \frac{1}{d\cdot T} \cdot \inparen{\log^2(1/\eps)\frac{p}{1-p}} \cdot  \|\tilde{\mathbf{r}}\|^2, \]
where $\tilde{\mathbf{r}} = (X\mathbf{\bbeta}^* - \mathbf{y})/ \|X^TX\|_2$.
\end{theorem}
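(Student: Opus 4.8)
The plan is to reduce the analysis to a single-step recursion for $a_t \triangleq \mathbb{E}\left[\twonorm{\bbeta_t - \bbeta^*}^2\right]$ of the classical SGD form $a_{t+1} \le (1 - \gamma_t \lambda)\, a_t + \gamma_t^2 C$, and then unroll it with a carefully chosen step-size schedule. First I would record the structure of the $\ell_2$ problem: writing $\mathbf{v}_i \triangleq \nabla\mathcal{L}(\ba_i,\bbeta_t) = \bx_i(\bx_i^T\bbeta_t - y_i)$, the true gradient is $\mathbf{g}_t = X^T(X\bbeta_t - \by) = X^TX(\bbeta_t-\bbeta^*)$, using $X^T(X\bbeta^*-\by)=0$. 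Conditioned on $\bbeta_t$, the only randomness in $\ghatt$ is the fresh set of stragglers at round $t$; writing $\mathcal{I}_i^j = B_{ij} W_j$ with $B_{ij}=\ind{i\in S_j}$ fixed by the distribution and $W_j\sim\mathrm{Bernoulli}(1-p)$ independent, we have $N_i \triangleq \sum_j \mathcal{I}_i^j$ with $\mathbb{E}[N_i] = (1-p)d_i$, so the reweighting $\tfrac{1}{d_i(1-p)}$ makes $\ghatt$ conditionally unbiased: $\mathbb{E}[\ghatt \mid \bbeta_t] = \mathbf{g}_t$. Expanding $\twonorm{\bbeta_{t+1}-\bbeta^*}^2$ and taking conditional expectation then gives the standard split into a descent term $-2\gamma_t\ip{\bbeta_t-\bbeta^*}{\mathbf{g}_t}$ and a second-moment term $\gamma_t^2\mathbb{E}[\twonorm{\ghatt}^2\mid\bbeta_t] = \gamma_t^2(\twonorm{\mathbf{g}_t}^2 + \Var{\ghatt})$.

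The crux of the argument is an exact computation of $\Var{\ghatt}$, and this is where the pair-wise balanced property enters. Since $\ghatt - \mathbf{g}_t = \sum_i \tfrac{N_i - (1-p)d_i}{d_i(1-p)}\mathbf{v}_i$ and the $W_j$ are independent, $\mathrm{Cov}(N_i,N_{i'}) = p(1-p)\sum_j B_{ij}B_{i'j}$, i.e. $p(1-p)$ times the number of workers holding both $\ba_i$ and $\ba_{i'}$. This overlap equals $d_i$ when $i=i'$ and, by pair-wise balance, exactly $d_id_{i'}/n$ when $i\neq i'$ — this is precisely the hypothesis that makes the off-diagonal terms collapse cleanly. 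Substituting yields
\[
\Var{\ghatt} = \frac{p}{1-p}\sum_{i=1}^m \frac{1}{d_i}\twonorm{\mathbf{v}_i}^2 + \frac{p}{n(1-p)}\left(\twonorm{\mathbf{g}_t}^2 - \sum_{i=1}^m \twonorm{\mathbf{v}_i}^2\right),
\]
where the second group carries a $1/n$ factor and is negligible once $n$ is large. Finally, the choice $d_i \propto \twonorm{\bx_i}^2$ is exactly what turns the leading term into a clean multiple of the objective: since $\twonorm{\mathbf{v}_i}^2 = \twonorm{\bx_i}^2(\bx_i^T\bbeta_t-y_i)^2$, the weights cancel and $\sum_i \tfrac{1}{d_i}\twonorm{\mathbf{v}_i}^2 = \tfrac{\fronorm{X}^2}{dm}\twonorm{X\bbeta_t-\by}^2$, exhibiting the advertised $1/d$ scaling.

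With the variance in hand I would assemble the recursion. Using $\ip{\bbeta_t-\bbeta^*}{\mathbf{g}_t} = \twonorm{X(\bbeta_t-\bbeta^*)}^2 \ge \lambda\twonorm{\bbeta_t-\bbeta^*}^2$ (with $\lambda=\lambda_{\min}(X^TX)$, assuming full column rank), bounding $\twonorm{\mathbf{g}_t}^2 \le \|X^TX\|\,\twonorm{X(\bbeta_t-\bbeta^*)}^2$, and splitting $\twonorm{X\bbeta_t-\by}^2 \le 2\twonorm{X(\bbeta_t-\bbeta^*)}^2 + 2\twonorm{\br}^2$, every $\bbeta_t$-dependent contribution beyond the pure descent term becomes a multiple of $\twonorm{X(\bbeta_t-\bbeta^*)}^2$, while the only irreducible term is the residual floor proportional to $\tfrac{p}{1-p}\tfrac{1}{d}\twonorm{\br}^2$. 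The hypothesis $d \ge 8\tfrac{p}{1-p}$ (together with $n$ large) is used here to guarantee that the variance-induced growth coefficient on $\twonorm{X(\bbeta_t-\bbeta^*)}^2$ is a small enough fraction of the descent that, for an admissible step size, these terms are absorbed and one obtains $a_{t+1} \le (1-\gamma_t\lambda)a_t + \gamma_t^2 C$ with $C$ proportional to $\tfrac{p}{1-p}\tfrac{1}{d}$ times the residual, the $\|X^TX\|$ normalization in $\tilde{\br}$ absorbing the spectral factors.

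The last step is scheduling. A two-phase schedule is clean: a constant step size $\gamma = \Theta(\log(1/\eps^2)/(\lambda T))$ drives the homogeneous factor $(1-\gamma\lambda)^T \le e^{-\gamma\lambda T} = \eps^2$, producing the $\eps^2\twonorm{\bbeta_0-\bbeta^*}^2$ term, while unrolling the inhomogeneous part contributes a floor of order $\tfrac{\gamma C}{\lambda}$, which is $\tfrac{1}{dT}$ times poly-logarithmic factors in $1/\eps$ and $\twonorm{\tilde{\br}}^2$; matching the precise $\log^2(1/\eps)$ and the constraint $T\ge 2\log(1/\eps^2)$ is a matter of tuning the constant (or switching to a $1/t$ decay in a second phase) while tracking the admissibility condition $\gamma\|X^TX\|\lesssim 1$ needed for descent. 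I expect the main obstacle to be the variance computation of the second paragraph: it is the step that genuinely exploits pair-wise balance and the separation between the one-time distribution randomness ($B_{ij}$) and the per-round straggler randomness ($W_j$), which is exactly the non-independence-of-batches difficulty flagged earlier; controlling the $1/n$ cross-terms and verifying that the weighted sum collapses to $\twonorm{X\bbeta_t-\by}^2$ under $d_i\propto\twonorm{\bx_i}^2$ is where the real work lies, whereas the final recursion and scheduling are standard once $C$ has the correct form.
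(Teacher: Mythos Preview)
Your proposal is essentially the paper's proof, reorganized. The variance computation you flag as ``the crux'' is exactly what the paper does: it computes $\EE[D_{\tZ}XX^TD_{\tZ}]$ (equivalently your $\mathrm{Cov}(N_i,N_{i'})$) using pair-wise balance to get the $d_id_{i'}/n$ off-diagonals, and the choice $d_i\propto\|\bx_i\|^2$ collapses the leading variance term to $\tfrac{1}{\sigma}\|X\bbeta_t-\by\|^2$ just as you describe. Your split $\|X\bbeta_t-\by\|^2\le 2\|X(\bbeta_t-\bbeta^*)\|^2+2\|\br\|^2$ plays the same role as the paper's AM--GM on the cross term.

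The one substantive difference is the contraction step. You pass through strong convexity with $\lambda=\lambda_{\min}(X^TX)$, which injects a condition-number dependence that is absent from the stated bound (note the theorem claims $T\ge 2\log(1/\eps^2)$ suffices, with no conditioning factor). The paper instead keeps everything in matrix form: it writes
\[
\bbeta_{t+1}-\bbeta^*=(I-\gamma_tX^TX-X^TD_{\tZ}D_\delta X)(\bbeta_t-\bbeta^*)-X^TD_{\tZ}D_\delta\br,
\]
expands the squared norm into five terms, and bounds the quadratic form $(\bbeta_t-\bbeta^*)^T(I-\gamma_tX^TX)^2(\bbeta_t-\bbeta^*)$ directly by $(1-\gamma_t\|X^TX\|)^2\|\bbeta_t-\bbeta^*\|^2$, arriving at the recursion $a_{t+1}\le(1-\tgt)a_t+2\tgt^2\tfrac{p}{1-p}\tfrac{\mu}{d}\|\tilde{\br}\|^2$ with $\tgt=\gamma_t\|X^TX\|$. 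It then uses the single schedule $\gamma_t=\tfrac{1}{\|X^TX\|}\min\{\tfrac12,\tfrac{\log(1/\eps^2)}{t}\}$ rather than a two-phase split. If you want to reproduce the theorem as stated you will need to avoid the $\lambda_{\min}$ detour and follow the matrix route; your version would otherwise yield a correct but weaker, condition-number-dependent variant.
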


That is, if the residual $\tilde{\mathbf{r}}$ is very tiny, so that the second term is smaller than the first, then the algorithm reaches accuracy $\eps$ in roughly $\log(1/\eps)$ steps.  However, if $\tilde{\mathbf{r}}$ is larger, then the convergence becomes polynomial, matching what we expect from SGD.  In this second case, the difference is that the replication factor $d$ appears in the denominator, so that when $d$ is larger, the error is smaller, explaining why replication helps.
Notice that if $p$ is constant, we expect good performance when $d = O(1)$.  In contrast, to exactly simulate gradient descent via coding would require $d = \Omega(n)$.

The main difficulty in proving Theorem~\ref{thm:convl2} (the formal version of Theorem~\ref{thm:informalsgcl2})
is that because the data distribution is fixed ahead of time, the ``batches'' that the master acquires in each round are not uniformly random, but rather come from some distribution determined by the data distribution.

\textbf{Beyond $\ell_2$ loss function.}
Our result above is limited in that it only applies to the $\ell_2$ loss function.  We believe that the analysis of Theorem~\ref{thm:convl2} should apply to general loss functions, but for now we observe that in fact a convergence rate of $1/t$ does follow for SGC from a result of \cite{SGDopt}.

In that work, the authors give a general analysis of stochastic gradient descent, which works as long as (in our language) the master is computing an unbiased estimator of the gradient.  The convergence speed of the algorithm then depends on the variance of this estimate.   This result applies in our setting:

\begin{theorem}[Informal; see Theorem~\ref{thm:conv} for a formal version]\label{thm:informalsgc}
Suppose that SGC is run on a matrix $A\triangleq [X | \mathbf{y}]$ of dimension $m \times (\ell+1)$ distributed to $n$ workers. Suppose that the distribution scheme is pairwise balanced, and that each row $\ba_i$ of $A$ is sent to $d_i$ different workers, $d_i\leq n$. Consider a version of the optimization problem in \eqref{eq:optprob} where $\bbeta$ is constrained to a convex set $\mathcal{W}$.  Under some mild assumptions on the loss function $\mathcal{L}$ and assuming there exists a constant $C$ such that
\[ \norm{\nabla \mathcal{L} (\mathbf{a}_i,\bbeta)}_2^2 \leq C^2 \]
for all $i \in [n]$ and for all $\bbeta \in \mathcal{W}$, then 
there is a way to choose the step size $\gamma_t$ at each step $t$ so that 
the error after $T$ iterations is bounded by 
\[ \mathbb{E} \| \bbeta_T - \bbeta^* \|_2^2 \leq \mathcal{O}(1/T). \]
\end{theorem}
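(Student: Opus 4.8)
The plan is to reduce the statement to a standard black-box convergence result for projected stochastic gradient descent, such as the one in \cite{SGDopt}, which guarantees an $\mathcal{O}(1/T)$ rate on $\EE\|\bbeta_T - \bbeta^*\|_2^2$ provided that (i) $\mathcal{L}$ is strongly convex on $\mathcal{W}$ (this is the ``mild assumption''), (ii) the update at each step uses an \emph{unbiased} estimate of the true gradient $\mathbf{g}_t$, and (iii) the second moment of that estimate is uniformly bounded. The bulk of the work is to verify (ii) and (iii) for the SGC estimate $\hat{\mathbf{g}}_t$ defined in \eqref{eq:ghat}; condition (i) is assumed. Crucially, as noted in the discussion preceding the theorem, only the first two moments of $\hat{\mathbf{g}}_t$ enter this black box, so the fact that the batches are correlated across rounds (the obstacle in Theorem~\ref{thm:convl2}) causes no difficulty here.

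First I would establish unbiasedness. Conditioning on the (fixed) data distribution, the only randomness in $\hat{\mathbf{g}}_t$ comes from the straggler indicators, which are independent across workers with each worker responsive with probability $1-p$. Hence $\EE[\mathcal{I}_i^j] = (1-p)\,\mathbf{1}[i \in S_j]$. Substituting into \eqref{eq:ghat}, the factor $(1-p)$ cancels the $(1-p)$ in the denominator, and summing $\mathbf{1}[i \in S_j]$ over $j$ counts exactly the $d_i$ workers holding $\mathbf{a}_i$, cancelling the $1/d_i$. This yields $\EE[\hat{\mathbf{g}}_t] = \sum_{i=1}^m \nabla \mathcal{L}(\mathbf{a}_i, \bbeta_t) = \mathbf{g}_t$, as required.

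Next I would bound the second moment. Writing $K_i$ for the number of responsive workers holding $\mathbf{a}_i$ (a $\mathrm{Binomial}(d_i, 1-p)$ variable), we have $\hat{\mathbf{g}}_t = \sum_i \frac{K_i}{d_i(1-p)}\nabla\mathcal{L}(\mathbf{a}_i,\bbeta_t)$, and expanding $\EE\|\hat{\mathbf{g}}_t\|_2^2$ produces diagonal terms (controlled by $\EE[K_i^2]$) and cross terms (controlled by $\EE[K_i K_{i'}]$ for $i \neq i'$). The cross moments are exactly where the pairwise-balanced hypothesis is used: since $\mathbf{a}_i$ and $\mathbf{a}_{i'}$ share precisely $d_i d_{i'}/n$ workers, one obtains a clean closed form for $\EE[K_i K_{i'}]$, and combining the diagonal and off-diagonal contributions with the per-gradient bound $\|\nabla\mathcal{L}(\mathbf{a}_i,\bbeta)\|_2^2 \leq C^2$ yields a uniform bound $\EE\|\hat{\mathbf{g}}_t\|_2^2 \leq G^2$ with $G$ an explicit function of $C, m, n, p$ and the $d_i$.

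With unbiasedness and the second-moment bound in hand, I would invoke the projected-SGD theorem of \cite{SGDopt} with the diminishing step size $\gamma_t \propto 1/t$ that it prescribes, together with the assumed strong convexity of $\mathcal{L}$ over $\mathcal{W}$, to conclude $\EE\|\bbeta_T - \bbeta^*\|_2^2 \leq \mathcal{O}(1/T)$. I expect the main obstacle to be the cross-term computation of $\EE[K_i K_{i'}]$: one must carefully separate, among the $d_i d_{i'}$ ordered pairs of workers holding $\mathbf{a}_i$ and $\mathbf{a}_{i'}$ respectively, the ``diagonal'' pairs (the same worker holding both, whose count is governed by pairwise balancedness) from the off-diagonal pairs, since the straggler event for a single worker is perfectly correlated with itself ($\EE[(\mathcal{I}^j)^2]=1-p$) but independent across distinct workers ($\EE[\mathcal{I}^j \mathcal{I}^{j'}]=(1-p)^2$). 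Everything else is either a direct moment computation or a citation.
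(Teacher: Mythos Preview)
Your proposal is correct and matches the paper's proof essentially step for step: the paper reduces to Lemma~1 of \cite{SGDopt} (restated as Theorem~\ref{thm:sgdopt}), verifies unbiasedness exactly as you describe, and bounds $\EE\|\hat{\mathbf{g}}_t\|_2^2$ by introducing the binomial variables $Z_i$ (your $K_i$), expanding the quadratic, and using the pairwise-balanced hypothesis to get the closed form $\EE[Z_{i_1}Z_{i_2}] = \frac{d_{i_1}d_{i_2}}{n}p(1-p) + d_{i_1}d_{i_2}(1-p)^2$ for $i_1\neq i_2$. Even the anticipated ``main obstacle'' you flag is precisely the computation the paper carries out.
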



The proof of Theorem~\ref{thm:informalsgc} (given Lemma 1 in \cite{SGDopt}) boils down to showing that our gradient estimator $\hat{\mathbf{g}}_t$ is an unbiased estimator of the true gradient and that $\EE \| \hat{ \mathbf{g} }_t \|_2^2$ is bounded for all $t$, 
which we do in Section~\ref{sec:proofs}.

We give more precise statements of these theorems in Section~\ref{sec:convergence}, and prove them in Section~\ref{sec:proofs}.

\subsection{Numerical simulations}
\begin{figure}
\setlength\figureheight{0.5\textwidth}
\setlength\figurewidth{0.7\textwidth}
\centering
\begin{minipage}{0.45\textwidth}
\resizebox{0.95\textwidth}{!}{
\input{intropb=0.5-w_-n=10,d=2,noise=1,datapoints=500,dim=50,iter=500str=0}
}
\caption{Comparison between Ignore--Stragglers--SGD, SGC and {{\erasurehead}} in terms of the distance between $\bbeta_t$ and $\bbeta^*$ the value of $\bbeta$ that minimizes the loss function. SGC outperforms Ignore--Straggler--SGD at the expense of adding small redundancy, $d=2$ in this example.}
\label{fig:intro1}
\end{minipage}%
\hfill%
\begin{minipage}{0.45\textwidth}
\resizebox{0.95\textwidth}{!}{
\begin{tikzpicture}

\definecolor{color0}{rgb}{0.12156862745098,0.466666666666667,0.705882352941177}
\definecolor{color1}{rgb}{1,0.498039215686275,0.0549019607843137}
\definecolor{color2}{rgb}{0.172549019607843,0.627450980392157,0.172549019607843}
\definecolor{color3}{rgb}{0.83921568627451,0.152941176470588,0.156862745098039}
\definecolor{color4}{rgb}{0.580392156862745,0.403921568627451,0.741176470588235}
\definecolor{color5}{rgb}{0.549019607843137,0.337254901960784,0.294117647058824}
\definecolor{color6}{rgb}{0.890196078431372,0.466666666666667,0.76078431372549}

\begin{axis}[
title={$n = 10$ workers},
title style = {font=\Large},
xlabel={Probability of straggling $p$},
ylabel={$\norm{\bbeta_T - \bbeta^*}$},
ylabel style = {font = \Large, at={(-0.04,0.5)}},
xlabel style = {font=\Large},
xmin=-0.045, xmax=0.945,
ymin=1.68644950884144e-08, ymax=0.80029285505015,
ymode=log,
width=\figurewidth,
height=\figureheight,
tick align=outside,
tick pos=left,
x grid style={white!69.01960784313725!black},
y grid style={white!69.01960784313725!black},
legend style={at={(0.03,0.97)}, anchor=north west, draw=white!80.0!black},
legend cell align={left}
]

\def\msize{2}

\addplot [ultra thick, color1, dashed, mark=*, mark size=\msize, mark options={solid}]
table {%
0 3.76612708530429e-08
0.1 3.64333587466363e-05
0.2 5.25404994195889e-05
0.3 7.7615181021457e-05
0.4 8.82456978241633e-05
0.5 9.3760156052778e-05
0.6 0.000124319114537291
0.7 0.000145097726659664
0.8 0.000211614411280793
0.9 0.000286440597633678
};
\addlegendentry{Ignore--Stragglers--SGD}

\addplot [ultra thick, color0, mark=square*, mark size=\msize, mark options={solid}]
table {%
0 3.76612708530429e-08
0.1 2.50850184492928e-05
0.2 3.56732644460501e-05
0.3 5.24751138616398e-05
0.4 6.09880379720325e-05
0.5 5.89368728831169e-05
0.6 8.56997507201122e-05
0.7 9.82836711012897e-05
0.8 0.000140529513791184
0.9 0.000193389169903699
};
\addlegendentry{SGC}

\addplot [ultra thick, color3, mark = +, mark size = \msize, mark options = {solid}]
table {%
0 3.76612708530429e-08
0.1 9.64861116289711e-06
0.2 1.93644890211481e-05
0.3 2.69691399160198e-05
0.4 4.14381715419961e-05
0.5 4.57826355068105e-05
0.6 7.431347568075e-05
0.7 0.000394148935463102
0.8 0.00466606686584583
0.9 0.0455991457852139
};
\addlegendentry{\sc ErasureHead}
\end{axis}

\end{tikzpicture}
}
 \caption{Final convergence of all algorithms run for $T=5000$ iterations as function of $p$ the probability of workers being stragglers. We omit GD in this setting, because it has the same performance as all algorithms when $p=0$.}
 \label{fig:intro2}
\end{minipage}

\end{figure}

We run extensive simulations on synthetic data $A$ of dimension $1000\times 100$ generated from a Gaussian distribution. 
We compare SGC to four other algorithms detailed in Section~\ref{sec:sims} and show that SGC outperforms all other algorithms when there are many stragglers. 
A typical result is shown in Figure~\ref{fig:intro1}.  In it, we observe that SGC and {{\erasurehead}} outperform Ignore--Stragglers--SGD at the expense of doubling the redundancy. In Figure~\ref{fig:intro2} we plot the convergence of approximate gradient codes as function of $p$. We observe that SGC outperforms {{\erasurehead}} when the number of stragglers is large, $p>0.6$. As expected, the approximate algorithms have worse accuracy than full-blown gradient descent, but we note that implementing exact gradient descent with a $p$ fraction of stragglers would require redundancy $d \approx pn \gg 2$. Moreover, we observe the flexibility of the approximate algorithms in the number of stragglers, and we note that computing GD exactly would lack this flexibility. In Section~\ref{sec:sims}, we comment on how the dependency between stragglers affect the convergence of SGC. Our implementation is publicly available \cite{SGCGit}.


\section{Theoretical Results} \label{sec:convergence}
In this section we precisely state our theoretical results.   We begin with a specialized result for the $\ell_2$ loss function, and then include a result for more general loss functions.

\subsection{Special case: $\ell_2$ loss function}
\label{sec:5A}
We begin with a result that holds for the special case of an $\ell_2$ loss function, aka, regression.  
Inspired by the approach of \cite{NW16} for SGD, our approach is to consider a \em weighted \em distribution scheme; that is, we choose $d_i$ proportionally to $\| \mathbf{x}_i\|_2^2$.  While the statement below is only for the $\ell_2$ loss function, we conjecture that it holds for more general loss functions. 

Define a parameter 
\[ \mu = \frac{ \frac{1}{m} \fronorm{X}^2 }{ \| X^T X \| }. \]
This parameter measures how incoherent $X$ is.  If $X$ is orthogonal, $\mu = 1$, while if, for example, $X$ is the all-ones matrix, then $\mu = 1/m$.  It is not hard to check that $\mu \in [0, 1]$.  

Suppose that $\mathcal{D}$ is a pair-wise balanced distribution scheme which sends $\mathbf{a}_i$ to $d_i$ different workers, where
\begin{align}\label{eq:di}
d_i &=  \sigma \cdot \|x_i\|_2^2,\\
 \sigma &= \frac{nd}{\fronorm{X}^2} = \frac{d}{\mu \|X^T X\|} \label{eq:sigma},\\
 d &= \frac{1}{m} \sum_{i \in [m]} d_i. \label{eq:d}
 \end{align}

Notice that, as stated, it is possible that the $d_i$ end up being non-integral; in the following, we will assume for simplicity below that $d_i \in \mathbb{Z}$ for all $i$.  Notice that if $\|x_i\|_2 = 1$ for all $i$, then this will be the case because we can choose $d_i = d$ to be any integer we choose, and this defines $\sigma$.

\begin{theorem}\label{thm:convl2}
Consider an SGC algorithm run on a matrix $A\triangleq[X|\mathbf{y}]$ of dimension $m \times (\ell+1)$ distributed to $n$ workers according to a pairwise balanced distribution scheme with $d_i$ as described above, with loss function
\[ \mathcal{L}([X|\mathbf{y}], \bbeta) = \twonorm{ X\bbeta - \mathbf{y} }^2, \]
and assume that the degrees $d_i \leq n$ are all integral. 

Suppose the stragglers follow the stochastic model of Section~\ref{sec:model}, and that each worker is a straggler independently with probability $p$.  Choose $\eps > 0$ and choose $T \geq 2 \log(1/\eps^2)$.

Suppose that the number of workers $n$ satisfies
$n \geq 8\inparen{ \frac{p}{1 - p} }$, and that 
\[  8 \mu \inparen{ \frac{p}{1 -p} } \leq d . \] 
Choose a step size
\[\gamma_t =  \frac{1}{\norm{ X^T X }} \cdot \min \inset{ \frac{1}{2} , \dfrac{ \log(1/\eps^2 ) }{ t } }. \]
Then, after $T$ iterations of SGC, we have 
\[ \EE \left[ \|\mathbf{\bbeta}_T - \mathbf{\bbeta}^*\|^2_2 \right] \leq \eps^2 \| \mathbf{\bbeta}_0 - \mathbf{\bbeta}^*\|^2_2 + \frac{1}{dT} \inparen{ \log^2(1/\eps^2) \inparen{\frac{p}{1-p}} \|\tilde{\mathbf{r}}\|^2 \mu} \]
where the expectation is over the stragglers in each of the $T$ iterations of SGC and where $\mu$ is as above, and where
\[ \tilde{\mathbf{r}} = \dfrac{\|X\mathbf{\bbeta}^* - \by\|_2^2}{\|X^T X\|_2}. \]

\end{theorem}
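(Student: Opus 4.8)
The plan is to track the squared error $\mathbf{e}_t \triangleq \bbeta_t - \bbeta^*$ and to exploit that for the $\ell_2$ loss the per-row gradient is $\nabla_i \triangleq \nabla\mathcal{L}(\mathbf{a}_i,\bbeta_t) = 2\bx_i(\bx_i^\top\bbeta_t - y_i) = 2\bx_i(\bx_i^\top\mathbf{e}_t + r_i)$, where $r_i$ is the $i$-th coordinate of the residual $\br = X\bbeta^* - \by$. Writing $c_i \triangleq \frac{1}{d_i(1-p)}\sum_{j}\mathcal{I}_i^j$, the update is $\mathbf{e}_{t+1} = \mathbf{e}_t - \gamma_t\hat{\mathbf{g}}_t$ with $\hat{\mathbf{g}}_t = 2\sum_i c_i\bx_i(\bx_i^\top\mathbf{e}_t + r_i)$. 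First I would record two moment facts: $\sum_j\mathcal{I}_i^j$ is $\mathrm{Binomial}(d_i,1-p)$, so $\EE[c_i]=1$, and together with the normal equations $X^\top\br = 0$ this gives $\EE[\hat{\mathbf{g}}_t\mid\bbeta_t] = 2X^\top X\mathbf{e}_t = \mathbf{g}_t$, i.e. the estimator is unbiased. The standard one-step identity then reads
\[ \EE\!\left[\|\mathbf{e}_{t+1}\|^2\mid\bbeta_t\right] = \|\mathbf{e}_t\|^2 - 4\gamma_t\|X\mathbf{e}_t\|^2 + \gamma_t^2\,\EE\!\left[\|\hat{\mathbf{g}}_t\|^2\mid\bbeta_t\right]. \]

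The heart of the proof is bounding $\EE\|\hat{\mathbf{g}}_t\|^2 = \sum_{i,i'}\EE[c_ic_{i'}]\langle\nabla_i,\nabla_{i'}\rangle$, and this is exactly where the ``batches are not independent'' difficulty flagged in the introduction surfaces: I must compute the cross-correlations $\EE[c_ic_{i'}]$ rather than treat rows as independent. Using that the straggler indicators are independent $\mathrm{Bernoulli}(1-p)$ across workers and that the scheme is pair-wise balanced (exactly $d_id_{i'}/n$ workers hold both $\mathbf{a}_i$ and $\mathbf{a}_{i'}$), I would show $\EE[c_i^2]=1+\frac{p}{d_i(1-p)}$ and $\EE[c_ic_{i'}]=1+\frac{1}{n}\frac{p}{1-p}$ for $i\neq i'$. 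Collecting the ``$1$'' terms back into $\|\mathbf{g}_t\|^2$ gives
\[ \EE\!\left[\|\hat{\mathbf{g}}_t\|^2\mid\bbeta_t\right] = \|\mathbf{g}_t\|^2 + \frac{p}{1-p}\!\left[\sum_i\Big(\tfrac{1}{d_i}-\tfrac{1}{n}\Big)\|\nabla_i\|^2 + \tfrac1n\|\mathbf{g}_t\|^2\right]. \]

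Now the weighting $d_i = \sigma\|\bx_i\|_2^2$ pays off: since $\|\nabla_i\|^2 = 4\|\bx_i\|^2(\bx_i^\top\mathbf{e}_t+r_i)^2$, we get $\sum_i\frac{1}{d_i}\|\nabla_i\|^2 = \frac{4}{\sigma}\|X\mathbf{e}_t+\br\|^2 = \frac{4}{\sigma}(\|X\mathbf{e}_t\|^2+\|\br\|^2)$, using $X^\top\br=0$ once more; with $\frac{1}{\sigma}=\frac{\mu\|X^\top X\|}{d}$ this cleanly splits the variance into a ``signal'' piece $\propto\|X\mathbf{e}_t\|^2$ and a constant ``noise'' piece $\propto\frac{\mu\|X^\top X\|}{d}\|\br\|^2$. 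Bounding $\|\mathbf{g}_t\|^2\le 4\|X^\top X\|\|X\mathbf{e}_t\|^2$ and invoking $n\ge 8\frac{p}{1-p}$ and $d\ge 8\mu\frac{p}{1-p}$ keeps the coefficient of $\|X\mathbf{e}_t\|^2$ below $\tfrac54\cdot 4\|X^\top X\|$, so for $\gamma_t\le\frac{1}{2\|X^\top X\|}$ the $-4\gamma_t\|X\mathbf{e}_t\|^2$ term still dominates and the recursion contracts:
\[ \EE\!\left[\|\mathbf{e}_{t+1}\|^2\mid\bbeta_t\right] \le \|\mathbf{e}_t\|^2 - c_0\gamma_t\|X\mathbf{e}_t\|^2 + \gamma_t^2\,\frac{p}{1-p}\frac{4\mu\|X^\top X\|}{d}\|\br\|^2, \]
for an absolute constant $c_0>0$ (one checks $c_0=\tfrac32$).

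The final step unrolls this over the two-phase step size. Restricting to the row space of $X$, where $\mathbf{e}_t$ lives since every $\hat{\mathbf{g}}_t\in\mathrm{span}\{\bx_i\}$, I would use strong convexity $\|X\mathbf{e}_t\|^2\ge\lambda_{\min}\|\mathbf{e}_t\|^2$ to turn the middle term into a multiplicative contraction $(1-c_0\gamma_t\lambda_{\min})$. In the constant-step phase $t\le 2\log(1/\eps^2)$ with $\gamma_t=\frac{1}{2\|X^\top X\|}$ this yields geometric decay down to the $\eps^2\|\mathbf{e}_0\|^2$ term, while in the $\gamma_t\propto\log(1/\eps^2)/t$ phase the standard $\Theta(1/t)$-step SGD summation (as in \cite{SV09,NWS14,NW16}) converts the $\gamma_t^2$-noise into the $\frac{1}{dT}\log^2(1/\eps^2)\frac{p}{1-p}\mu\|\tilde{\br}\|^2$ floor. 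The main obstacle is the second-moment computation, specifically getting $\EE[c_ic_{i'}]$ right for $i\neq i'$ (where pair-wise balance is essential and a naive independence assumption would fail), followed by the bookkeeping of the two-phase summation to land both the exponential and the $1/(dT)$ terms with the right constants; the dependence of the exponential phase length on the conditioning of $X^\top X$ also needs care.
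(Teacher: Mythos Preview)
Your plan is essentially the paper's proof: both compute the second moments of the per-row weights (your $c_i$, the paper's $Z_i/\EE Z_i$) using the pairwise-balanced property, exploit $d_i=\sigma\|\bx_i\|^2$ to collapse $\sum_i\frac{1}{d_i}\|\nabla_i\|^2$ into $\|X\mathbf e_t\|^2+\|\br\|^2$ via $X^\top\br=0$, and unroll a contraction-plus-noise recursion over the two-phase step size. The one substantive organizational difference is that the paper keeps the mean part in matrix form, writing $\bbeta_{t+1}-\bbeta^*=(I-\gamma_t X^\top X - X^\top D_{\tilde Z}D_\delta X)(\bbeta_t-\bbeta^*)-X^\top D_{\tilde Z}D_\delta\br$ and bounding the quadratic form $(\bbeta_t-\bbeta^*)^\top(I-\gamma_t X^\top X)^2(\bbeta_t-\bbeta^*)$ directly by $(1-\gamma_t\|X^\top X\|)^2\|\bbeta_t-\bbeta^*\|^2$; this is how the paper arrives at a contraction factor $(1-\tilde\gamma_t)$ with $\tilde\gamma_t=\gamma_t\|X^\top X\|$ and hence at $T\ge 2\log(1/\eps^2)$ \emph{without} any $\lambda_{\min}$ or condition-number dependence, whereas your route through $\|X\mathbf e_t\|^2\ge\lambda_{\min}\|\mathbf e_t\|^2$ would introduce one. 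If you want to land exactly on the stated bound, follow the paper's matrix bookkeeping at that step rather than passing to $\lambda_{\min}$.
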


\begin{corollary}
Suppose that $X \mathbf{\bbeta}^* = \mathbf{y}$ (that is, we are solving a system for which there is a solution) and that $n \geq 8p/(1-p)$. Then the algorithm described in Theorem~\ref{thm:convl2} converges with
\[ \EE \left[\|\mathbf{\bbeta}_T - \mathbf{\bbeta}^* \|_2^2\right] \leq \eps^2 \|\mathbf{\bbeta}_0 - \mathbf{\bbeta}^*\|^2 \]
provided that $T \geq 2 \log(1/\eps^2)$ and $d \geq 8\mu p/(1 - p)$.
\end{corollary}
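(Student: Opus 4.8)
The plan is to invoke Theorem~\ref{thm:convl2} directly and observe that the interpolation hypothesis $X\bbeta^* = \by$ forces the polynomial (residual) term in that bound to vanish, leaving only the geometric term. First I would check that the standing assumptions of the corollary are precisely the hypotheses needed to apply the theorem. The corollary assumes $n \geq 8p/(1-p)$, $T \geq 2\log(1/\eps^2)$, and $d \geq 8\mu p/(1-p)$; the last of these is just the theorem's condition $8\mu\inparen{\frac{p}{1-p}} \leq d$ rewritten. With these in hand, Theorem~\ref{thm:convl2} applies verbatim to the SGC iterates under the same step-size schedule $\gamma_t$, yielding
\[ \EE\left[\twonorm{\bbeta_T - \bbeta^*}^2\right] \leq \eps^2\twonorm{\bbeta_0 - \bbeta^*}^2 + \frac{1}{dT}\inparen{\log^2(1/\eps^2)\inparen{\frac{p}{1-p}}\norm{\tilde{\br}}^2\mu}. \]

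Next I would substitute the consistency assumption into the residual $\tilde{\br} = \twonorm{X\bbeta^* - \by}^2 / \norm{X^TX}_2$ appearing in the theorem's bound. Since $X\bbeta^* = \by$, the numerator $\twonorm{X\bbeta^* - \by}^2 = 0$, so $\tilde{\br} = 0$ and the entire second term of the bound vanishes identically. What remains is exactly $\eps^2\twonorm{\bbeta_0 - \bbeta^*}^2$, which is the claimed guarantee.

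There is no genuine obstacle here: the corollary is simply the specialization of Theorem~\ref{thm:convl2} to the interpolating (consistent) regime, where the noise floor set by the residual collapses to zero and only the exponential decay survives. The one point worth double-checking is the bookkeeping of hypotheses---in particular that the prescribed $\gamma_t$ schedule remains admissible---but since that schedule depends only on $\norm{X^TX}$, $\eps$, and $t$, and not on the residual, no additional care is required.
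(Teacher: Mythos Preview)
Your proposal is correct and matches the paper's approach exactly: the paper states this corollary immediately after Theorem~\ref{thm:convl2} with no separate proof, treating it as the obvious specialization where $X\bbeta^* = \mathbf{y}$ forces $\tilde{\mathbf{r}} = 0$ and kills the second term of the bound. Your verification that the hypotheses line up and that the step-size schedule requires no adjustment is a reasonable bit of due diligence, but there is nothing more to it than what you have written.
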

In particular, since $\mu \leq 1$, this says that we need to take $d \gtrsim p/(1-p)$ and the algorithm converges extremely quickly. 


\subsection{Beyond $\ell_2$ loss function}
Now, we consider a constrained version of the problem given in~\eqref{eq:optprob}, where $\bbeta$ belongs to a bounded set $\mathcal{W}$. In this section, we state a result for general loss functions $\mathcal{L}$ which are $\lambda$-strongly convex:
\begin{definition}[Strongly convex function]
A function $\mathcal{L}$ is $\lambda$-strongly convex, if for all $\bbeta$, $\bbeta'$ $\in \mathbb{R}^{\ell}$ and any subgradient $\mathbf{g}$ of $\mathcal{L}$ at $\bbeta$,
\begin{equation}
\mathcal{L}(\bbeta')\geq \mathcal{L}(\bbeta)+\ip{ \mathbf{g} }{\bbeta'-\bbeta} + \dfrac{\lambda}{2} \norm{\bbeta'-\bbeta}_2^2.
\end{equation}
\end{definition}

%
%


Theorem~\ref{thm:conv} below follows from the analysis in \cite{SGDopt}.

\begin{theorem}\label{thm:conv}
Suppose that SGC is run
on a matrix $A\triangleq [X | \mathbf{y}]$ of dimension $m \times (\ell+1)$ distributed to $n$ workers with each row of $A$ sent to $d_i$ different workers, $d_i\leq n$, according to a pairwise balanced distribution scheme. Consider a version of the optimization problem in \eqref{eq:optprob} where $\bbeta$ is constrained to a convex set $\mathcal{W}$, i.e., $$\bbeta^* = \displaystyle \arg\min_{\bbeta \in \mathcal{W}} \mathcal{L}(A,\bbeta),$$ and at each step of the algorithm
$\bbeta_{t+1} = \Pi_{\mathcal{W}}(\bbeta_t - \gamma_t \ghatt),$ where $\Pi$ is the projection operator.
Let $p$ denote the probability of a given worker being a straggler at a given iteration. 
Suppose that the 
loss function $\mathcal{L}$ is $\lambda$-strongly convex with respect to the optimal point $\bbeta^* \in \mathcal{W}$, and that all of the partial gradients $\nabla \mathcal{L}( \ba_i, \bbeta )$ are bounded for $i \in [n]$ and $\bbeta \in \mathcal{W}$, i.e. there exists a constant $C$ so that 
$$\norm{\nabla \mathcal{L} (\mathbf{a}_i,\bbeta)}_2^2 \leq C, \qquad \forall \bbeta \in \mathcal{W}, i \in [n].$$ 
Suppose that the step size is set to be $\gamma_t = 1/(\lambda t)$.  Then the error after $T$ iterations is bounded by 
\begin{equation}\label{eq:ineq1}
\mathbb{E}\| \bbeta_T - \bbeta^* \|_2^2%
\leq \dfrac{4}{\lambda^2 T}\cdot  mC^2 \inparen{ \frac{p}{1-p} \cdot \frac{1}{d_{\min} } +  \frac{ (m-1) p }{n (1-p) } + m } 
\end{equation}
where $d_\text{min} \triangleq \displaystyle \min_{i\in [m]} d_i$. 

\end{theorem}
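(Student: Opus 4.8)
The plan is to apply the general convergence result for the projected stochastic subgradient method from \cite{SGDopt} as a black box, reducing everything to two claims about the gradient estimator $\hat{\mathbf{g}}_t$ defined in \eqref{eq:ghat}: that it is a \emph{conditionally unbiased} estimate of the true gradient $\mathbf{g}_t$, and that its second moment $\EE\|\hat{\mathbf{g}}_t\|_2^2$ admits the right upper bound. Concretely, the result of \cite{SGDopt} says that for a $\lambda$-strongly convex objective, projected updates with step size $\gamma_t = 1/(\lambda t)$, and an estimator with $\EE[\hat{\mathbf{g}}_t \mid \bbeta_t] \in \partial \mathcal{L}(A, \bbeta_t)$ and $\EE\|\hat{\mathbf{g}}_t\|_2^2 \leq G^2$ for all $t$, one has $\EE\|\bbeta_T - \bbeta^*\|_2^2 \leq 4G^2/(\lambda^2 T)$. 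Thus it suffices to identify the correct value of $G^2$ and match it against \eqref{eq:ineq1}.

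First I would establish unbiasedness. Writing $\mathcal{I}_i^j = H_i^j B_j$, where $H_i^j \in \inset{0,1}$ is the deterministic indicator that worker $j$ holds $\mathbf{a}_i$ and $B_j$ is the independent $\mathrm{Bernoulli}(1-p)$ indicator that worker $j$ is not a straggler at round $t$, I would observe that conditioning on $\bbeta_t$ fixes all the partial (sub)gradients $\mathbf{v}_i := \grad{i}{t}$, so the only remaining randomness is in the $B_j$'s, which are independent of the history. Since each $\mathbf{a}_i$ sits on exactly $d_i$ workers, $\EE[\sum_j \mathcal{I}_i^j] = d_i(1-p)$, and hence $\EE[\hat{\mathbf{g}}_t \mid \bbeta_t] = \sum_i \mathbf{v}_i = \mathbf{g}_t$, a subgradient of $\mathcal{L}(A,\cdot) = \sum_i \mathcal{L}(\mathbf{a}_i, \cdot)$ at $\bbeta_t$.

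The substantive step is the second-moment bound. Setting $w_i := (\sum_j \mathcal{I}_i^j)/(d_i(1-p))$ so that $\hat{\mathbf{g}}_t = \sum_i w_i \mathbf{v}_i$, I would expand $\EE\|\hat{\mathbf{g}}_t\|_2^2 = \sum_{i,i'} \EE[w_i w_{i'}]\ip{\mathbf{v}_i}{\mathbf{v}_{i'}}$ and compute the coefficients. Summing $d_i$ i.i.d.\ Bernoulli's gives $\EE[w_i^2] = 1 + p/(d_i(1-p))$ on the diagonal. For $i \neq i'$ I would split $\EE[\mathcal{I}_i^j \mathcal{I}_{i'}^{j'}]$ according to $j = j'$ versus $j \neq j'$; the crucial input is the pairwise-balanced property $\sum_j H_i^j H_{i'}^j = d_i d_{i'}/n$, which yields $\EE[w_i w_{i'}] = 1 + p/(n(1-p))$. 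Reassembling and using $\sum_{i \neq i'}\ip{\mathbf{v}_i}{\mathbf{v}_{i'}} = \|\mathbf{g}_t\|_2^2 - \sum_i \|\mathbf{v}_i\|_2^2$ produces the clean identity
\[ \EE\|\hat{\mathbf{g}}_t\|_2^2 = \sum_i \frac{p}{d_i(1-p)}\|\mathbf{v}_i\|_2^2 \;+\; \frac{p}{n(1-p)}\inparen{\|\mathbf{g}_t\|_2^2 - \sum_i \|\mathbf{v}_i\|_2^2} \;+\; \|\mathbf{g}_t\|_2^2. \]

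Finally I would bound each piece using $\|\mathbf{v}_i\|_2^2 \leq C^2$: the first term by $\frac{p}{(1-p)d_{\min}}\, mC^2$; the last term by $\|\mathbf{g}_t\|_2^2 \leq (\sum_i \|\mathbf{v}_i\|_2)^2 \leq m^2 C^2$ via Cauchy--Schwarz; and the middle term via the same estimate $\|\mathbf{g}_t\|_2^2 - \sum_i\|\mathbf{v}_i\|_2^2 \leq (m-1)\sum_i\|\mathbf{v}_i\|_2^2 \leq (m-1)mC^2$. This gives $G^2 = mC^2\inparen{\frac{p}{(1-p)d_{\min}} + \frac{(m-1)p}{n(1-p)} + m}$, and substituting into the \cite{SGDopt} bound recovers \eqref{eq:ineq1} exactly. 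I expect the main obstacle to be the covariance computation rather than the invocation of \cite{SGDopt}: because the cross terms $\ip{\mathbf{v}_i}{\mathbf{v}_{i'}}$ can be negative they cannot be bounded individually, so the correct move is to reorganize them back into $\|\mathbf{g}_t\|_2^2$ before applying Cauchy--Schwarz, and getting the coefficients exactly right hinges on the $j=j'$ versus $j \neq j'$ split together with the pairwise-balance count $d_i d_{i'}/n$.
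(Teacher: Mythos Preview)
Your proposal is correct and follows essentially the same strategy as the paper: invoke the black-box result of \cite{SGDopt}, verify that $\hat{\mathbf{g}}_t$ is unbiased, and bound $\EE\|\hat{\mathbf{g}}_t\|_2^2$ via the moments $\EE[w_i^2]=1+\frac{p}{(1-p)d_i}$ and $\EE[w_iw_{i'}]=1+\frac{p}{n(1-p)}$ that come from the pairwise-balanced assumption. The only (minor) divergence is in the very last step: the paper does not reorganize the cross terms into $\|\mathbf{g}_t\|_2^2$ but simply bounds each $\ip{\mathbf{v}_i}{\mathbf{v}_{i'}}\leq C^2$ term by term---which is legitimate precisely because every coefficient $\EE[w_iw_{i'}]$ is nonnegative---so your concern about negative cross terms, while valid in general, is moot here, and both routes land on the identical constant $G^2=mC^2\bigl(\tfrac{p}{(1-p)d_{\min}}+\tfrac{(m-1)p}{n(1-p)}+m\bigr)$.
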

This shows that SGC does have a convergence rate of $O(1/T)$, matching regular SGD \cite[Lemma~1]{SGDopt}. This shows that at least the convergence rate is not hurt by the fact that the data assignment is fixed.  However, unlike Theorem~\ref{thm:convl2}, this result does not always significantly improve as $d$ increases (although we note that the bound above \em is \em decreasing in $d_{\min}$, so in some parameter regimes---when $n \gg m$ and $p$ is close to $1$---this does indicate some improvement).  We leave it as an interesting open problem to fully generalize our result of Theorem~\ref{thm:convl2} to general loss functions.

\section{Simulation Results}\label{sec:sims}
\begin{table}
\scriptsize
\renewcommand{\arraystretch}{2}
\centering
\begin{tabular}{c|C{12cm}}
\footnotesize \bf Algorithm & \footnotesize \bf Brief description \\
\hline
Stochastic Gradient Code (SGC) & The master sends each data vector $\mathbf{x}_i$ to $d_i$ workers chosen at random, where $d_i$  is proportional to the $\ell_2$ norm of $\mathbf{x}_i$ and is computed as in~\eqref{eq:di} with $d=2$. Each worker sends to the master the weighted sum of its partial gradients as in~\eqref{eq:sumw}. The master computes the gradient estimate as the sum of the received results from non straggling workers.\\
Bernoulli Gradient Code (BGC) \cite{charles2017approximate} & Similar to SGC but all data vectors are replicated $d$ times, i.e., $d_i = 2$ for all $i\in [m]$.\\
{{\erasurehead}} \cite{charles2017approximate,wang2019erasurehead} & Partitions the data set equally and sends each partition to $d$ workers. Workers send the sum of the partial gradients to the master who computes the gradient estimate as the sum of distinct received partial gradients divided by total number of data vectors. \\
Ignore--Stragglers--SGD \cite{chen2016revisiting} & Partitions the data among the workers with no redundancy. Workers send the sum of the partial gradients to the master who computes the gradient estimate as the sum of distinct partial gradients divided by the average number of data vectors received per iteration. \\
SGC--Send--All & Same as SGC with one difference: at each iteration the workers send all the partial gradients to the master. The master computes the gradient estimate as the sum of \emph{distinct} partial gradients divided by the average number of data vectors received per iteration. \\ \hline
\end{tabular}
\caption{Summary of the stochastic algorithms that we implement in our simulations.}
\label{tab:summary}
\end{table}

\subsection{Simulation setup}
We simulated the performance of SGC on synthetic data $X$ of dimension $1000\times 100$. The data is generated as follows: each row vector $\mathbf{x}_i$ is generated using a Gaussian distribution $\mathcal{N}(0,100)$. We pick a random vector $\bar{\bbeta}$ with components being integers between $1$ and $10$ and generate $y_i \sim \mathcal{N}(\ip{\mathbf{x}_i}{\bar{\bbeta}},1)$. Our code and the generated data set can be found in \cite{SGCGit}.

We run linear regression using the $\ell_2$ loss function, i.e., $$\mathcal{L}(\mathbf{a}_i,\bbeta_t) = \dfrac{1}{2}\left(\ip{\mathbf{x}_i}{\bbeta_t} - y_i\right)^2.$$

\begin{figure}[b]
\setlength\figureheight{0.5\textwidth}
\setlength\figurewidth{0.7\textwidth}
\centering
\begin{minipage}{0.33\textwidth}
\resizebox{\textwidth}{!}{
\input{avg-over-10pb=0.1-w_-n=10,d=2,noise=1,datapoints=500,dim=50,iter=500str=0}
}
\captionsetup{subtype}
\caption{Error vs iterations, $p=0.1$.}
\label{fig:conv01}
\end{minipage}%
\hfill%
\begin{minipage}{0.33\textwidth}
\resizebox{\textwidth}{!}{
\input{avg-over-10pb=0.5-w_-n=10,d=2,noise=1,datapoints=500,dim=50,iter=500str=0.tex}
}
\captionsetup{subtype}
\caption{Error vs iterations, $p=0.7$.}
\label{fig:conv07}
\end{minipage}%
\hfill%
\begin{minipage}{0.33\textwidth}
\resizebox{\textwidth}{!}{
\begin{tikzpicture}

\definecolor{color0}{rgb}{0.12156862745098,0.466666666666667,0.705882352941177}
\definecolor{color1}{rgb}{1,0.498039215686275,0.0549019607843137}
\definecolor{color2}{rgb}{0.172549019607843,0.627450980392157,0.172549019607843}
\definecolor{color3}{rgb}{0.83921568627451,0.152941176470588,0.156862745098039}
\definecolor{color4}{rgb}{0.580392156862745,0.403921568627451,0.741176470588235}
\definecolor{color5}{rgb}{0.549019607843137,0.337254901960784,0.294117647058824}
\definecolor{color6}{rgb}{0.890196078431372,0.466666666666667,0.76078431372549}
\begin{axis}[
title style = {font=\Large},
xlabel={Probability of straggling $p$},
ylabel={$\norm{\bbeta_T - \bbeta^*}$},
ylabel style = {font = \Large, at={(-0.02,0.5)}},
xlabel style = {font=\Large},
xmin=-0.045, xmax=0.945,
ymin=1.86957061738095e-08, ymax=0.0918564790229895,
ymode = log,
width=\figurewidth,
height=\figureheight,
tick align=outside,
tick pos=left,
x grid style={white!69.01960784313725!black},
y grid style={white!69.01960784313725!black},
legend style={at={(0.95,0.4)}, anchor=north east, draw=white!80.0!black},
legend cell align={left}
]

\def\msize{2}

\addplot [ultra thick, color1, dashed, mark=*, mark size=\msize, mark options={solid}]
table {%
0 3.76612708530429e-08
0.1 3.64333587466363e-05
0.2 5.25404994195889e-05
0.3 7.7615181021457e-05
0.4 8.82456978241633e-05
0.5 9.3760156052778e-05
0.6 0.000124319114537291
0.7 0.000145097726659664
0.8 0.000211614411280793
0.9 0.000286440597633678
};
\addlegendentry{Ignore--Stragglers--SGD}

\addplot [ultra thick, color5, mark=asterisk, mark size=\msize, mark options={solid}]
table {%
0 3.76612708530429e-08
0.1 2.3200739209075e-05
0.2 3.4045487330212e-05
0.3 4.84481208579209e-05
0.4 5.94629155464262e-05
0.5 5.83818396013618e-05
0.6 8.16981993717e-05
0.7 9.92088779126475e-05
0.8 0.00014090527548098
0.9 0.000192971252083702
};
\addlegendentry{BGC}

\addplot [ultra thick, color0, mark=square*, mark size=\msize, mark options={solid}]
table {%
0 3.76612708530429e-08
0.1 2.50850184492928e-05
0.2 3.56732644460501e-05
0.3 5.24751138616398e-05
0.4 6.09880379720325e-05
0.5 5.89368728831169e-05
0.6 8.56997507201122e-05
0.7 9.82836711012897e-05
0.8 0.000140529513791184
0.9 0.000193389169903699
};
\addlegendentry{SGC}

\addplot [ultra thick, color3, mark = +, mark size = \msize, mark options = {solid}]
table {%
0 3.76612708530429e-08
0.1 9.64861116289711e-06
0.2 1.93644890211481e-05
0.3 2.69691399160198e-05
0.4 4.14381715419961e-05
0.5 4.57826355068105e-05
0.6 7.431347568075e-05
0.7 0.000394148935463102
0.8 0.00466606686584583
0.9 0.0455991457852139
};
\addlegendentry{\sc ErasureHead}

\addplot [ultra thick, color2, dotted, mark=triangle*, mark size=\msize, mark options={solid}]
table {%
0 3.76612708530429e-08
0.1 1.13455437312175e-05
0.2 2.22593341347114e-05
0.3 3.64810012708523e-05
0.4 4.75076162224916e-05
0.5 5.31461283261025e-05
0.6 7.93040356830945e-05
0.7 9.14665893115764e-05
0.8 0.000135398638947876
0.9 0.000185873582618849
};
\addlegendentry{SGC--Send--All}

\end{axis}

\end{tikzpicture}
}
\captionsetup{subtype}
\caption{Error at $T=5000$ vs $p$.}
\label{fig:convp}
\end{minipage}
\caption{Convergence as function of probability of workers being stragglers $p$ is shown for small $p=0.1$ in~(a) and for large $p=0.7$ in~(b) for $n=10$ workers. SGC convergence has two phases: an exponential decay in the beginning until it reaches an error floor. SGC has same performance as BGC, but outperforms {\erasurehead} for large values of $p$. In~(c) the error floor at $T=5000$ iterations is shown versus $p$.}
\end{figure}

We show simulations for $n=10$ workers. For each simulation we vary the probability of a worker being a straggler from $p=0$ to $p=0.9$ with a step of $0.1$. We run the algorithm for $5000$ iterations with a variable step size given\footnote{In our theoretical analysis we assumed that the step size $\gamma_t$ is proportional to $1/t$. In our numerical simulations, we tried different  functions of  $\gamma_t$ and  observed that the one in \eqref{eq:gamma} gives better convergence rate for all the considered algorithms.} by %
\begin{equation}\label{eq:gamma}
\gamma_t = 7 \dfrac{\ln (10^{100})}{t^{0.7}}.\end{equation}
For all  simulations, we run each experiment $10$  times and average the results.  For SGC,  each data vector $\mathbf{x}_i$ is replicated $d_i$ times, where the $d_i$'s are computed as in~\eqref{eq:di} and~\eqref{eq:sigma} with $d=2$. Then, each $d_i$ is rounded to the nearest integer. Due to rounding, the actual value of $d$ given in~\eqref{eq:d} will be close to $2$. In our generated data set, the majority of the $d_i$'s are equal to $2$ while the others are either $1$ or $3$ resulting in average redundancy $d=2.024$. For the other algorithms in Table~\ref{tab:summary}, the average redundancy $d$ is chosen to be exactly equal to $2$.

We omit comparing SGC to the gradient codes   in \cite{raviv2017gradient} and \cite{horii2019distributed} because they do not match our setting; the former requires a high redundancy factor $d$ and the latter requires the master to run a decoding algorithm at each iteration.

\subsection{Convergence}
In Figures~\ref{fig:conv01} and~\ref{fig:conv07}, we plot the error $\norm{\bbeta_t-\bbeta^*}$ for up to $5000$ iterations for small and large probability of workers being stragglers, namely for $p=0.1$ and $p=0.7$, respectively. Here, $\bbeta^*$ given in ~\eqref{eq:optprob} is computed using the pseudoinverse of $X$, i.e., $\bbeta^* = \left(X^T X\right)^{-1}X^T Y$. We notice that for all $p$ the convergence rate of SGC exhibits two phases: an exponential decay followed by an error floor. To see the benefit of replication, we compare SGC to Ignore--Stragglers--SGD. Both have the same performance in the exponential phase, but SGC has a lower error floor due to redundancy. A lower bound on the performance of SGC is SGC--Send--All which has a lower error floor because it computes a better estimate of the gradient at the expense of a higher communication cost. However, as $p$ increases the gap between the two error floor of SGC and SGC--Send--All decreases. In our simulations, we notice the error floor of both algorithms almost match for $p\geq 0.6$ as can be seen in Figure~\ref{fig:convp}. 

For our chosen data set, SGC and BGC have similar performance. This is mainly due to the fact that most of the data vectors have the same replication factor in BGC and SGC which is $2$ times. For other data sets with more variance in the $d_i$'s, we observe that SGC can have better performance. {\erasurehead} has better error floor than SGC for small values of $p$. However, for large $p$ the rate of the exponential decay drastically decreases for {\erasurehead}.



\begin{figure}[b]
\setlength\figureheight{0.5\textwidth}
\setlength\figurewidth{0.7\textwidth}
\centering
\begin{minipage}{0.36\textwidth}
\resizebox{\textwidth}{!}{
\input{SGC-str-n=10.tex}
}
\captionsetup{subtype}
\caption{Error versus iterations, $p=0.7$.}
\label{fig:convstr}
\end{minipage}%
\hspace{1cm}
\begin{minipage}{0.36\textwidth}
\resizebox{\textwidth}{!}{
\begin{tikzpicture}

\definecolor{color0}{rgb}{0.12156862745098,0.466666666666667,0.705882352941177}
\definecolor{color1}{rgb}{1,0.498039215686275,0.0549019607843137}
\definecolor{color2}{rgb}{0.172549019607843,0.627450980392157,0.172549019607843}
\definecolor{color3}{rgb}{0.83921568627451,0.152941176470588,0.156862745098039}
\definecolor{color4}{rgb}{0.580392156862745,0.403921568627451,0.741176470588235}
\definecolor{color5}{rgb}{0.549019607843137,0.337254901960784,0.294117647058824}
\definecolor{color6}{rgb}{0.890196078431372,0.466666666666667,0.76078431372549}
\definecolor{color7}{rgb}{0.23, 0.27, 0.29}
\definecolor{color8}{rgb}{0.0, 0.5, 1.0}
\definecolor{color9}{rgb}{0.74, 0.2, 0.64}
\definecolor{color10}{rgb}{0.48, 0.25, 0.0}

\begin{axis}[
title style = {font=\Large},
xlabel={Iterations until stragglers change, $\nbiter$},
ylabel={$\norm{\bbeta_T - \bbeta^*}$},
ylabel style = {font = \Large, at={(-0.0,0.5)}},
xlabel style = {font=\Large},
xmin=0, xmax=1000,
ymin=2.57688445014917e-05, ymax=0.00587657409386,
width=\figurewidth,
height=\figureheight,
tick align=outside,
tick pos=left,
x grid style={white!69.01960784313725!black},
y grid style={white!69.01960784313725!black},
legend style={at={(0.01,0.99)}, anchor=north west, draw=white!80.0!black},
legend cell align={left},
cycle list name=color,
]

\def\msize{2}


\addplot [ultra thick, color4, dashed, mark=circle, mark size=3, mark options={solid}]
table {%
1	0.000140529513791184
50	0.00114342073021
100	0.00122370286019539
500	0.00411040263875965
1000 0.00587657409386
};
\addlegendentry{$p = 0.8$}

\addplot [ultra thick, color3, mark=*, mark size=3, mark options={solid}]
table {%
1	9.82836711012897e-05
50	0.000754023266529
100	0.00101493905339174
500	0.00225276908391849
1000 0.00309788341776
};
\addlegendentry{$p = 0.7$}


\addplot [ultra thick, color2, dotted, mark=square, mark size=3, mark options={solid}]
table {%
1	5.89368728831169e-05
50	0.000470276168537
100	0.000632886713728858
500	0.00147080641387916
1000 0.00150407182499
};
\addlegendentry{$p = 0.5$}


\addplot [ultra thick, color1, dashed, mark=asterisk, mark size=3, mark options={solid}]
table {%
1	5.24751138616398e-05
50	0.000324996814166
100	0.000475856165720481
500	0.000948715897046086
1000 0.00116148465175
};
\addlegendentry{$p = 0.3$}


\addplot [ultra thick, color0, mark=+, mark size=3, mark options={solid}]
table {%
1	2.50850184492928e-05
50	0.000164140691289
100	0.000191371424454783
500	0.000441028808398464
1000	0.000317455166232
};
\addlegendentry{$p = 0.1$}


\end{axis}

\end{tikzpicture}
}
\captionsetup{subtype}
\caption{Error at iteration $T=5000$ versus $\nbiter$.}
\label{fig:convpstr}
\end{minipage}
\caption{The effect of the dependency of stragglers across iterations on the performance of SGC. We assume that the identity of the stragglers change every $\nbiter$ iterations. In~(a) the convergence of the error as function of the number of iterations is shown for different values of $\nbiter$ and $p=0.7$. SGC maintains an exponential decay in the error for the tested values of $\nbiter$ and $p$. However, the rate of the decay decreases and the error floor increases with the increase of $\nbiter$. In~(b), the error at iteration $T=5000$ as function of $\nbiter$ is shown for different values of $p$.}
\label{fig:con}
\end{figure}

\subsection{Dependency between stragglers across iterations}
Our theoretical analysis assumes that the stragglers are independent across iterations. We check the effect of this dependency on the numerical performance of SGC.  We use a simple model to enforce  dependency of stragglers across iterations by fixing the stragglers for   $\nbiter$ iterations, after which the stragglers are chosen again randomly and iid with a probability $p$ and this is repeated until the algorithm stops.
The special value of   $\nbiter =1$ implies that the stragglers are independent. A large value of $\nbiter$ implies a longer dependency among the stragglers across iterations. 
We observe in Figure~\ref{fig:con} that SGC still maintains the two phases behavior. However, as $\nbiter$ increases, the rate of convergence decreases and the error floor increases. 

\section{Proofs}\label{sec:proofs}
\subsection{Proof of Theorem~\ref{thm:convl2}}
In this section, we prove Theorem~\ref{thm:convl2}.
In the case when the loss function is 
\[ \mathcal{L}([X|y], \bbeta) = \dfrac{1}{2} \| X\bbeta - \mathbf{y} \|_2^2, \]
we have
\[ \nabla \mathcal{L}(\ba_i , \bbeta) = (\ip{\bx_i}{\bbeta} - y_i) \cdot \bx_i, \]
so that
\[ \sum_i ^m\nabla \mathcal{L}(\ba_i, \bbeta)  = X^T(X\bbeta - \mathbf{y}) = \nabla \mathcal{L}(A, \beta).\]

Fix an iteration $t$.
Let $Z_i$ (which depends on $t$; we suppress this dependence in the notation) be defined by
\[ Z_i = \sum_{j=1}^n \mathcal{I}_i^j.\]
That is, $Z_i$ is the number of workers who hold $\ba_i$ who are not stragglers at round $t$. 
Thus, $Z_i$ is a binomial random variable with mean $d_i(1-p)$ and variance $d_i p (1 - p).$
Let 
\[ \tZ_i = Z_i - \EE Z_i, \]
so that
\[ \EE (\tZ_i)^2 = d_i p (1 - p) \]
and 
\[ \EE \tZ_i \tZ_j = \frac{ d_i d_j }{n} p (1 - p). \]

From the definition of $\bbeta_{t+1}$ and replacing $\hat{\mathbf{g}}_t$ by its value from~\eqref{eq:ghat}, we have
\begin{align*}
 \bbeta_{t+1} &= \bbeta_t - \gamma_t \hat{\mathbf{g}}_t\\
&= \bbeta_t - \gamma_t \sum_{j=1}^n \sum_{i=1}^m \frac{ \mathcal{I}_i^j }{ d_i (1 - p) } \nabla \mathcal{L}(\mathbf{a}_i,\bbeta) \\  
&= \bbeta_t - \gamma_t \sum_{i=1}^m \frac{ Z_i }{ d_i (1 - p) } ( \ip{\bx_i}{\bbeta_t} - y_i) \bx_i \\
&= \bbeta_t - \sum_{i=1}^m Z_i \delta_i ( \ip{\bx_i}{\bbeta_t} - y_i) \bx_i,
\end{align*}
where we define
\[ \delta_i = \frac{ \gamma_t }{ d_i (1 - p ) }. \]
(Notice that $\delta_i$ also depends on $t$; we suppress this for notational convenience).

Expanding out the terms, we have
\begin{align*}
\bbeta_{t+1} - \bbeta^* 
&= \bbeta_t - \bbeta^* - \sum_{i=1}^m \EE Z_i \delta_i \bx_i \bx_i^T (\bbeta_t - \bbeta^*)\\
&\qquad\qquad- \sum_{i=1}^m \tZ_i \delta_i \bx_i \bx_i^T(\bbeta_t - \bbeta^*) \\
&\qquad\qquad- \sum_{i=1}^m\EE Z_i \delta_i (\ip{\bx_i}{\bbeta^*} - y_i) \bx_i \\
&\qquad\qquad- \sum_{i=1}^m \tZ_i \delta_i (\ip{\bx_i}{\bbeta^*} - y_i) \bx_i
\end{align*}
where we have split up $Z_i = \EE Z_i + \tZ_i$ and 
\[ (\ip{\bx_i}{\bbeta_t} - y_i) \bx_i = \bx_i \bx_i^T (\bbeta_t - \bbeta^*) + (\ip{\bx_i}{\bbeta^*} - y_i)\bx_i.\]
Letting
\[ \mathbf{r} := X \bbeta^* - \mathbf{y} \]
be the optimal residual and writing the above in matrix notation, we have
\[ \bbeta_{t+1} - \bbeta^* = (\bbeta_t - \bbeta^*) - (1 - p )X^T D_d D_\delta X (\bbeta_t - \bbeta^*)
- X^T D_{\tZ} D_\delta X (\bbeta_t - \bbeta^*) 
- (1 - p) X^T D_d D_\delta \mathbf{r}
- X^T D_{\tZ} D_\delta \mathbf{r}, \]
where $D_{\tZ}$ is diagonal with entries $\tZ_i$, $D_\delta$ is diagonal with entries $\delta_i$, and $D_d$ is diagonal with entries $d_i$.
Recalling that 
\[ \delta_i = \frac{\gamma_t}{(1-p)d_i} \] 
we have
\[ D_\delta \cdot D_d = \frac{\gamma_t}{1-p} . \]
Thus we can simplify the above as 
\begin{align*}
\bbeta_{t+1} - \bbeta^* &= 
(\bbeta_t - \bbeta^*) - \gamma_t X^T  X (\bbeta_t - \bbeta^*)
- X^T D_{\tZ} D_\delta X (\bbeta_t - \bbeta^*) 
- \gamma_t X^T  \mathbf{r}
- X^T D_{\tZ} D_\delta \mathbf{r} \\
&= (\bbeta_t - \bbeta^*) - \gamma_t X^T  X (\beta_t - \beta^*)
- X^T D_{\tZ} D_\delta X (\bbeta_t - \bbeta^*) 
- X^T D_{\tZ} D_\delta \mathbf{r}
\end{align*}
using the fact that $X^T \mathbf{r} = 0$ since $\mathbf{r} = X\bbeta^* - \mathbf{y}$ is the optimal residual.
We simplify this further as:
\[ \bbeta_{t+1} - \bbeta^* =
(I - \gamma_t X^T X + X^T D_{\tZ} D_\delta X)(\bbeta_t - \bbeta^*) 
- X^T D_{\tZ} D_\delta \mathbf{r}. \]

Now we compute $\EE\|\bbeta_{t+1} - \bbeta^*\|^2$, where the expectation is over the choice of $\beta_{t+1}$, conditioned on $\beta_t$.  We have
\begin{align}
\EE \| \bbeta_{t+1} - \bbeta^* \|^2
&= (\bbeta_t - \bbeta^*)^T \left[ (I - \gamma_t X^T X)^2 +  X^T D_\delta \EE[D_{\tZ} X X^T D_{\tZ}] D_\delta X \right](\bbeta_t - \bbeta^*) \label{eq:1} \\
& \qquad + \mathbf{r}^T D_\delta \EE [D_{\tZ} X X^T D_{\tZ}]  D_\delta \mathbf{r}  \label{eq:2} \\
& \qquad + \mathbf{r}^T D_\delta \EE [D_{\tZ} X X^T D_{\tZ} ]D_\delta X (\bbeta_t - \bbeta^*) \label{eq:3} \\
& \qquad + \mathbf{r}^T D_\delta \EE D_{\tZ} XX^T( I - \gamma_t X^T X ) ( \bbeta_t - \bbeta^* ) \label{eq:4}\\
& \qquad + (\bbeta_t - \bbeta^*)^T (I - \gamma_t X^T X)( X^T \EE D_{\tZ} D_\delta X ) ( \bbeta_t - \bbeta^* ). \label{eq:5}
\end{align}
We handle each of these terms below.
First, we observe that \eqref{eq:4} and \eqref{eq:5} are zero because $\EE D_{\tZ} = 0$.
In order to handle \eqref{eq:1}, \eqref{eq:2}, \eqref{eq:3},
we compute
 \[ \EE D_{\tZ} X X^T D_{\tZ}. \]
The off-diagonal elements are given by
\[ \EE \tZ_i \tZ_j \ip{\bx_i}{\bx_j} = \frac{d_id_j}{n}p(1-p) \ip{\bx_i}{\bx_j}, \]
and the diagonal elements are given by
\[ \EE \tZ_i^2 \|\bx_i\|^2 = d_i p (1-p) \|\bx_i\|^2 = d_i^2 p(1-p) / \sigma .\]
Thus, 
\[ \EE D_{\tZ} X X^T D_{\tZ} = p(1-p) \inparen{ \frac{1}{n} D_d XX^T D_d + \frac{1}{\sigma}\inparen{I -\frac{D_d}{n}} D_d^2 }. \]
Now we handle the terms \eqref{eq:1} and \eqref{eq:2}.
First, for \eqref{eq:1}, we have
\begin{align*}
&(\bbeta_t - \bbeta^*)^T \left[ (I - \gamma_t X^T X)^2 + \EE X^T D_\delta D_{\tZ} X X^T D_{\tZ} D_\delta X \right](\bbeta_t - \bbeta^*) \\
&= (\bbeta_t - \bbeta^*)^T \left[ (I - \gamma_t X^T X)^2 + p(1-p)X^T D_\delta \inparen{ \frac{1}{n} D_d XX^T D_d + \frac{1}{\sigma} \inparen{ I - \frac{D_d}{n}} D_d^2 } D_\delta X \right] (\bbeta_t - \bbeta^*) \\
&= (\bbeta_t - \bbeta^*)^T \left[ (I - \gamma_t X^T X)^2 + \frac{p(1-p)}{n}X^T D_\delta D_d XX^T D_dD_\delta X
+ \frac{p(1-p)}{\sigma}X^T \inparen{ I - \frac{D_d}{n} }D_\delta D_d^2 D_\delta X \right] (\bbeta_t - \bbeta^*) \\
&= (\bbeta_t - \bbeta^*)^T \left[ (I - \gamma_t X^T X)^2 + \frac{p(1-p)\gamma_t^2}{n}X^T  XX^T X
+ \frac{\gamma_t^2 p}{(1-p)\sigma}X^T \inparen{ I - \frac{D_d}{n}}  X \right] (\bbeta_t - \bbeta^*),
\end{align*}
where in the last line we used the fact again that $D_\delta D_d = \gamma_t I/(1-p)$.  Now we can bound this term by
\[ \eqref{eq:1} \leq 
\inparen{(1 - \gamma_t \|X^T X\|)^2 + \frac{ p \gamma_t^2 }{(1-p)n} \|X^T X\|^2 + \frac{\gamma_t^2 p}{(1-p)\sigma} \|X^T X\|}\|\bbeta_t - \bbeta^*\|^2, \]
where above we have used the fact that
\[ \left\| X^T \inparen{ I -\frac{D_d}{n} } X \right\| \leq \| X^T X \|, \]
because $I - D_d/n$ is a diagonal matrix whose diagonal entries are all in $[0,1]$ (using the fact that $d_i \leq n$ for all $i$).
The second term \eqref{eq:2} is bounded by
\begin{align*}
\eqref{eq:2} &=  \mathbf{r}^T D_\delta \EE D_{\tZ} X X^T D_{\tZ}  D_\delta \mathbf{r} \\
&= p(1-p)\mathbf{r}^T D_\delta  \inparen{ \frac{1}{n} D_d XX^T D_d + \inparen{ I - \frac{D_d}{n} }\frac{1}{\sigma} D_d^2 } D_\delta \mathbf{r} \\
&= \frac{p(1-p)}{n}\mathbf{r}^T D_\delta  D_d XX^T D_d D_\delta \mathbf{r} + \frac{p(1 - p)}{\sigma} \mathbf{r}^T \inparen{ I - \frac{D_d}{n} } D_\delta D_d^2 D_\delta \mathbf{r} \\
&\leq \frac{ \gamma_t^2 p }{(1-p)n} \mathbf{r}^T XX^T \mathbf{r} + \gamma_t^2 \cdot \frac{p}{1-p}\cdot \frac{ \mathbf{r}^T \inparen{ I - \frac{D_d}{n} } \mathbf{r} }{\sigma} \\
&\leq \gamma_t^2 \cdot \frac{p}{1-p} \cdot \frac{\|\mathbf{r}\|^2}{\sigma}, 
\end{align*}
where we have used the fact that $X^T \mathbf{r} = 0$, and that $\mathbf{r}^T\inparen{I - D_d/n} \mathbf{r} \leq \|\mathbf{r}\|^2$ because $d_i \leq n$ for all $i$.

Finally we bound \eqref{eq:3}.  We have, using our expression for $\EE[ D_{\tZ} X X^T D_{\tZ} ]$ from above that
\begin{align*} \eqref{eq:3} &=  \mathbf{r}^T D_\delta \EE [D_{\tZ} X X^T D_{\tZ} ]D_\delta X (\bbeta_t - \bbeta^*) \\
&= p(1-p) \mathbf{r}^T D_\delta  \inparen{ \frac{1}{n} D_d XX^T D_d + \frac{1}{\sigma}\inparen{I -\frac{D_d}{n}} D_d^2 } D_\delta X (\bbeta_t - \bbeta^* ) \\
 &= \frac{p(1 - p)}{n} \mathbf{r}^T D_\delta D_d XX^T D_d D_\delta X (\bbeta_t - \bbeta^*) + \frac{p(1-p)}{\sigma} \mathbf{r}^T D_\delta \inparen{ I - \frac{D_d}{n} } D_d^2 D_\delta X ( \bbeta_t - \bbeta^* ) \\
 &= \frac{\gamma_t^2 p}{(1-p)n} \mathbf{r}^T  XX^T  X (\bbeta_t - \bbeta^*) + \frac{\gamma_t^2 p}{(1-p)\sigma} \mathbf{r}^T \inparen{ I - \frac{D_d}{n} } X ( \bbeta_t - \bbeta^* ) 
\end{align*}
using the fact that $D_\delta D_d = \gamma_t I/(1-p)$ in the last line.  Now, the first term is equal to zero because $\mathbf{r}^T X = 0$, and we have
\begin{align*}
\eqref{eq:3} 
&= \frac{\gamma_t^2 p}{(1-p)\sigma} \mathbf{r}^T \inparen{ I - \frac{D_d}{n} } X ( \bbeta_t - \bbeta^* ) \\
&= \frac{ \gamma_t^2 p} { (1-p)\sigma n  } \mathbf{r}^T D_d X (\bbeta_t - \bbeta^* )
\end{align*}
again using the fact that $\mathbf{r}^T X = 0$.  Finally, we can bound
\begin{align*}
\eqref{eq:3} 
&= \frac{ \gamma_t^2 p } { (1-p)\sigma n } \mathbf{r}^T D_d X (\bbeta_t - \bbeta^* )  \\
&\leq \frac{ \gamma_t^2 p  }{(1-p)\sigma n} \| \mathbf{r} \| \| D_d X ( \bbeta_t - \bbeta^* ) \| \\
&\leq \frac{ \gamma_t^2 p }{(1-p)\sigma } \| \mathbf{r} \| \| X ( \bbeta_t - \bbeta^* ) \| \\
&\leq \frac{ \gamma_t^2 p }{(1-p)\sigma } \| \mathbf{r} \| \sqrt{ \|X^T X\| } \| \bbeta_t - \bbeta^* \| \\
&\leq \frac{ \gamma_t^2 p}{(1-p)\sigma} \inparen{ \frac{ \|\mathbf{r}\|^2 + \|X^T X \| \| \bbeta_t - \bbeta^* \|^2 }{2} },
\end{align*}
using the arithmetic-geometric-mean inequality in the final line.  In particular, this term is similar to terms that appear in both \eqref{eq:1} and \eqref{eq:2}, and (along with the observation that \eqref{eq:4}, \eqref{eq:5} are zero) we have
\begin{align}
\EE \| \bbeta_{t+1} - \bbeta^* \|^2
&\leq \eqref{eq:1} + \eqref{eq:2} + \eqref{eq:3} \notag \\
&\leq \inparen{(1 - \gamma_t \|X^T X\|)^2 + \frac{ p \gamma_t^2 }{(1-p)n} \|X^T X\|^2 + \frac{\gamma_t^2 p}{(1-p)\sigma} \|X^T X\|}\|\bbeta_t - \bbeta^*\|^2 \notag \\
&\qquad + \gamma_t^2 \cdot \frac{p}{1-p} \cdot \frac{\|\mathbf{r}\|^2}{\sigma} \notag\\
&\qquad +  \frac{ \gamma_t^2 p }{(1-p)\sigma} \inparen{ \frac{ \|\mathbf{r}\|^2 + \|X^T X \| \| \bbeta_t - \bbeta^* \|^2 }{2} }\notag \\
&\leq 
\inparen{(1 - \gamma_t \|X^T X\|)^2 + \frac{ p \gamma_t^2 }{(1-p)n} \|X^T X\|^2 + \frac{2\gamma_t^2 p}{(1-p)\sigma} \|X^T X\|}\|\bbeta_t - \bbeta^*\|^2 \label{eq:1prime}\\
&\qquad + 2\gamma_t^2 \cdot \frac{p}{1-p} \cdot  \frac{\|\mathbf{r}\|^2}{\sigma}. \label{eq:2prime} 
\end{align}

Now we recall our choice of 
\[ \gamma_t = \frac{1}{\|X^T X \|} \cdot \min\inset{ \frac{1}{2} , \frac{ \log(1/\eps^2) }{ t } }, \]
and the definition of 
\[ \sigma = \frac{d}{\mu} \frac{1}{\|X^T X\|}. \]
Let
\[ \tgt := \|X^T X\|\gamma_t = \min \inset{ \frac{1}{2} , \frac{ \log(1/\eps^2 }{t} }. \]

Now we can simplify our bounds on \eqref{eq:1prime} and \eqref{eq:2prime} as:
\begin{align*}
\eqref{eq:1prime} &\leq 
\inparen{(1 - \gamma_t \|X^T X\|)^2 + \frac{ p \gamma_t^2 }{(1-p)n} \|X^T X\|^2 + 2\frac{\gamma_t^2 p}{(1-p)\sigma}  \|X^T X\|}\|\bbeta_t - \bbeta^*\|^2 \\  
&\leq \inparen{ \inparen{ 1 - \tgt }^2 + \inparen{ \frac{p}{1-p} } \inparen{ \tgt }^2 \cdot \frac{1}{n} + \inparen{ \frac{2p}{1-p} } \inparen{ \tgt }^2 \inparen{ \frac{\mu}{d} }}\| \bbeta_t - \bbeta^*\|^2\\
&\leq \inparen{ \inparen{ 1 - \tgt }^2 + \frac{1}{2} \inparen{ \tgt }^2 } \|\beta_t - \beta^*\|^2,
\end{align*}
using the assumptions that $n \geq 4p/(1 - p)$ and $d \geq 8\mu p/(1-p)$.
Now we have:
\begin{align*}
\eqref{eq:1prime}
&\leq \inparen{ \inparen{ 1 - \tgt }^2 + \frac{1}{2} \inparen{ \tgt }^2 } \|\bbeta_t - \bbeta^*\|^2 \\ 
&= \inparen{  1 - 2\tgt  + \frac{3}{2} \tgt } \|\bbeta_t - \bbeta^*\|^2 \\
&\leq \inparen{  1 - \tgt  } \|\bbeta_t - \bbeta^*\|^2,
\end{align*}
using from the definition of $\tgt$ that $\tgt \leq 1/2$ and hence $\tgt^2 \leq \frac{1}{2} \tgt$.

Meanwhile, 
\begin{align*}
\eqref{eq:2prime} 
&\leq 2\gamma_t^2 \cdot \frac{p}{1-p} \frac{\|\mathbf{r}\|^2}{\sigma} \\
&\leq 2\inparen{ \tgt}^2 \inparen{ \frac{p}{1 - p} } \frac{ \|\mathbf{r}\|^2 }{ \sigma \|X^T X \|^2 }\\
&\leq 2\inparen{ \tgt}^2 \inparen{ \frac{p}{1 - p} } \frac{ \|\tilde{\mathbf{r}}\|^2 }{ \sigma \|X^T X \| },
\end{align*}
recalling that $\tilde{\mathbf{r}} = \mathbf{r}/\|X^T X\|$.  Thus
\begin{align*}
\eqref{eq:2prime} 
&\leq 2\inparen{ \tgt}^2 \inparen{ \frac{p}{1 - p} } \frac{ \|\tilde{\mathbf{r}}\|^2 }{ \sigma \|X^T X \| } \\
&= 2\inparen{ \tgt }^2 \inparen{ \frac{p}{1 - p} } \inparen{ \frac{\mu}{d} } \|\tilde{\mathbf{r}}\|^2.
\end{align*}
Putting the two terms together, we conclude that for fixed $t$, 
\[ \EE\|\bbeta_{t+1} - \bbeta^*\|_2^2 \leq \inparen{ 1 - \tgt } \|\bbeta_t - \bbeta^*\|_2^2 + 
 2\inparen{ \tgt }^2 \inparen{ \frac{p}{1 - p} } \inparen{ \frac{\mu}{d} } \|\tilde{\mathbf{r}}\|^2. \]

Now, we proceed by induction, 
using the fact that the stragglers are independent between the different rounds, to conclude that
\begin{align*}
\EE\|\beta_{T} - \beta^*\|_2^2 
&\leq \inparen{ \prod_{t=1}^T ( 1 - \tgt ) }  \|\bbeta_0 - \bbeta^* \|_2^2 + 2T \tilde{\gamma}^2_T \inparen{ \frac{p}{1-p} } \inparen{ \frac{\mu}{d} } \| \tilde{\mathbf{r}} \|_2^2 \\
&\leq \inparen{1 - \frac{\log(1/\eps^2)}{T} }^T \| \bbeta_0 - \bbeta^* \|_2^2 + 2T \cdot \frac{ \log^2(1/\eps^2) }{T^2 } \inparen{ \frac{p}{1-p} } \inparen{ \frac{\mu}{d} } \| \tilde{\mathbf{r}} \|_2^2 \\
&\leq \eps^2 \| \bbeta_0 - \bbeta^* \|_2^2 +  \frac{2}{Td} \cdot \inparen{ \log^2(1/\eps^2) \inparen{ \frac{p}{1-p} } \mu \|\tilde{\mathbf{r}}\|_2^2 }.
\end{align*}
This proves the theorem.

\subsection{Proof of Theorem~\ref{thm:conv}}
In this section we 
prove Theorem~\ref{thm:conv}.
Our proof of Theorem~\ref{thm:conv} relies on the following result from \cite{SGDopt}
which shows that any stochastic algorithm with a ``good'' estimator of the true gradient converges with rate $\mathcal{O}(\frac{1}{T})$.  We translate this result to our setting.
\begin{theorem}[Lemma~1 in \cite{SGDopt}]\label{thm:sgdopt}
Suppose $\mathcal{L}$ is $\lambda$-strongly convex with respect to $\bbeta^*$ over a convex set $\mathcal{W}$, and that $\ghatt$ is an unbiased estimator of a subgradient $\mathbf{g}_t$ of the loss function $\mathcal{L}$ at $\bbeta_t$, i.e., $\EE \ghatt = \mathbf{g}_t$.  Suppose also that for all $t$, $\EE\norm{\ghatt}_2^2 \leq G$.\footnote{  Here, the randomness in the expectation is over the next round of stragglers, conditioned on the previous rounds.}  Then if we pick $\gamma_t = 1/\lambda t$, it holds for any $T$ that
\bes
\EE \norm{\bbeta_T - \bbeta^*}_2^2 \leq \dfrac{4 G}{\lambda^2 T}.
\ees
\end{theorem}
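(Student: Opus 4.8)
The plan is to establish the stated second-moment bound by the classical recursion for projected stochastic subgradient descent on a strongly convex objective, closed off with an induction on $t$. First I would fix an iteration $t$ and use the non-expansiveness of the Euclidean projection $\Pi_{\mathcal{W}}$ together with $\bbeta^* \in \mathcal{W}$ to discard the projection, giving $\norm{\bbeta_{t+1} - \bbeta^*}_2^2 \le \norm{\bbeta_t - \gamma_t \ghatt - \bbeta^*}_2^2$. Expanding the square yields $\norm{\bbeta_t - \bbeta^*}_2^2 - 2\gamma_t \ip{\ghatt}{\bbeta_t - \bbeta^*} + \gamma_t^2 \norm{\ghatt}_2^2$. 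I would then take expectation conditioned on $\bbeta_t$: by unbiasedness $\EE[\ghatt \mid \bbeta_t] = \mathbf{g}_t$, and by hypothesis $\EE[\norm{\ghatt}_2^2 \mid \bbeta_t] \le G$, so conditionally $\EE[\norm{\bbeta_{t+1}-\bbeta^*}_2^2 \mid \bbeta_t] \le \norm{\bbeta_t-\bbeta^*}_2^2 - 2\gamma_t \ip{\mathbf{g}_t}{\bbeta_t-\bbeta^*} + \gamma_t^2 G$, an inequality now involving only the deterministic subgradient $\mathbf{g}_t$.

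The crucial step is to convert the cross term $\ip{\mathbf{g}_t}{\bbeta_t - \bbeta^*}$ into a multiple of $\norm{\bbeta_t - \bbeta^*}_2^2$ using $\lambda$-strong convexity. Applying the strong-convexity inequality with $\bbeta' = \bbeta^*$, $\bbeta = \bbeta_t$ gives $\ip{\mathbf{g}_t}{\bbeta_t - \bbeta^*} \ge \mathcal{L}(\bbeta_t) - \mathcal{L}(\bbeta^*) + \tfrac{\lambda}{2}\norm{\bbeta_t - \bbeta^*}_2^2$, while applying it at the constrained minimizer $\bbeta^*$ (where first-order optimality makes the subgradient term nonnegative) gives the quadratic-growth bound $\mathcal{L}(\bbeta_t) - \mathcal{L}(\bbeta^*) \ge \tfrac{\lambda}{2}\norm{\bbeta_t - \bbeta^*}_2^2$. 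Combining the two yields the full-strength estimate $\ip{\mathbf{g}_t}{\bbeta_t - \bbeta^*} \ge \lambda \norm{\bbeta_t - \bbeta^*}_2^2$. Substituting this bound and $\gamma_t = 1/(\lambda t)$, and then taking total expectation via the tower property over the straggler randomness, I obtain the scalar recursion for $D_t := \EE\norm{\bbeta_t - \bbeta^*}_2^2$, namely $D_{t+1} \le \left(1 - \tfrac{2}{t}\right) D_t + \tfrac{G}{\lambda^2 t^2}$.

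Finally I would close the estimate by induction on $t$, with hypothesis $D_t \le 4G/(\lambda^2 t)$. Writing $c = G/\lambda^2$, the first iteration ($t=1$) has coefficient $1 - 2/t = -1$, so $D_2 \le -D_1 + c \le c \le 4c/2$, establishing the base case with no assumption on $D_1$. For the inductive step at $t \ge 2$, substituting $D_t \le 4c/t$ into the recursion gives $D_{t+1} \le 4c/t - 7c/t^2$, and the elementary inequality $7c/t^2 \ge 4c/(t(t+1))$, equivalent to $7(t+1) \ge 4t$, forces $D_{t+1} \le 4c/(t+1)$, completing the induction and hence the theorem. The main obstacle, and the reason the argument needs care, is that the naive bound $\ip{\mathbf{g}_t}{\bbeta_t - \bbeta^*} \ge \tfrac{\lambda}{2}\norm{\bbeta_t - \bbeta^*}_2^2$ would leave only the coefficient $(1 - 1/t)$ in the recursion, which produces an $O(\log t / t)$ rate rather than $O(1/t)$; the clean $O(1/T)$ conclusion hinges on extracting the full factor $\lambda$ by additionally invoking quadratic growth at the optimum, and then tracking the constant so that the induction closes for every $t$.
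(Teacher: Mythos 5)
Your proposal is correct and is, in substance, exactly the argument behind the cited result: the paper itself gives no proof of this theorem but imports it as Lemma~1 of \cite{SGDopt}, and your chain of steps---projection non-expansiveness, the conditional-expectation recursion, the full-strength strong-convexity estimate $\ip{\mathbf{g}_t}{\bbeta_t-\bbeta^*}\geq\lambda\norm{\bbeta_t-\bbeta^*}_2^2$ obtained by combining the subgradient inequality with quadratic growth at the constrained minimizer, and the induction that tracks the constant $4$---is precisely the proof given in that reference. The one loose end is that your induction base case is at $t=2$, so the claim ``for any $T$'' is not verified at $T=1$; this is fixed in one line by the same full-strength bound together with Cauchy--Schwarz and Jensen: $\lambda\norm{\bbeta_1-\bbeta^*}_2^2\leq\ip{\mathbf{g}_1}{\bbeta_1-\bbeta^*}\leq\norm{\mathbf{g}_1}_2\,\norm{\bbeta_1-\bbeta^*}_2$ and $\norm{\mathbf{g}_1}_2^2=\norm{\EE\,\hat{\mathbf{g}}_1}_2^2\leq\EE\norm{\hat{\mathbf{g}}_1}_2^2\leq G$, whence $\EE\norm{\bbeta_1-\bbeta^*}_2^2\leq G/\lambda^2\leq 4G/\lambda^2$.
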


\begin{proof}[Proof of Theorem~\ref{thm:conv}]
In order to apply Theorem~\ref{thm:sgdopt}, we need to show that the estimate of the gradient obtained by the master at each iteration is unbiased.
To see this, recall that at each iteration $t$, the master computes the following estimate of the gradient:
\be
\hat{\mathbf{g}}_t \triangleq \sum_{j = 1}^{n} \sum_{i=1}^{m} \dfrac{ \mathcal{I}_i^j }{d_i(1-p)}\grad{i}{t}.
\ee
Therefore, 
\begin{align}
\mathbb{E}\hat{\mathbf{g}}_t & = \sum_{j = 1}^{n} \sum_{i=1}^{m} \dfrac{ \EE\mathcal{I}_i^j }{d_i(1-p)}\grad{i}{t}.
\end{align}
Recall that $\mathcal{I}_i^j$ is an indicator function equal to $1$ if worker $j$ is non straggler and has data vector $\mathbf{a}_i$. 
Thus,
\begin{align*}
\EE \hat{ \mathbf{g}}_t &=  \sum_{j=1}^n \sum_{i=1}^m \frac{ \ind{ \text{ worker $j$ has data vector $\mathbf{a}_i$ } } }{ d_i } \grad{i}{t} \\
&=  \sum_{i=1}^m \grad{i}{t} \\
&= \nabla \mathcal{L}(A, \bbeta_t ).
\end{align*}

Now, we need to show that under the conditions of the theorem, the variance $\EE \|\hat{\bg}(A, \bbeta_t)\|_2^2$ is bounded.  (Here, the randomness is over the choice of the stragglers in round $t$).
As in the proof of Theorem~\ref{thm:convl2}, let $Z_i$ be the binomial random variable that counts the number of non-stragglers (in a given round $t$) who have block $i$.  Thus we have
$$\EE \left(Z_i^2\right)  = d_ip(1-p) + d_i^2(1-p)^2, \qquad \qquad \EE Z_{i_1}Z_{i_2} = \dfrac{d_{i_1} d_{i_2}}{n} p(1-p) + d_{i_1} d_{i_2}(1-p)^2.$$

We compute
\begin{align*}
\EE \twonorm{ \hat{ \bg}( A, \bbeta_t) }^2 &\leq
\EE \max_{ \bbeta \in \mathcal{W} } \twonorm{ \hat{\bg}(A, \bbeta ) }^2 \\
&= \EE \max_{ \bbeta \in \mathcal{W} } \twonorm{ \sum_{j=1}^n \sum_{i=1}^m \frac{ \mathcal{I}_i^j } { d_i (1-p) } \nabla \mathcal{L}( \ba_i, \bbeta ) }^2 \\
&= \frac{1}{(1-p)^2}\EE \max_{ \bbeta \in \mathcal{W} } \twonorm{ \sum_{i=1}^m \frac{ Z_i }{ d_i } \nabla\mathcal{L}( \ba_i, \bbeta ) }^2\\
&= \frac{1}{(1-p)^2} \EE \max_{ \bbeta \in \mathcal{W} }\sum_{i_1 = 1}^m \sum_{i_2 = 1}^m \frac{ Z_{i_1} Z_{i_2} }{d_{i_1}d_{i_2} } \ip{ \nabla \mathcal{L}( \ba_{i_1} , \bbeta )}{ \nabla \mathcal{L} (\ba_{i_2}, \bbeta) } \\
&\leq \frac{1}{(1-p)^2} \sum_{i_1 = 1}^m \sum_{i_2 = 1}^m \frac{ \EE[ Z_{i_1} Z_{i_2}]  }{d_{i_1}d_{i_2} } \max_{\bbeta \in \mathcal{W} }\ip{ \nabla \mathcal{L}( \ba_{i_1} , \bbeta )}{ \nabla \mathcal{L} (\ba_{i_2}, \bbeta) },
\end{align*}
where above we have used the fact that the terms $\EE[ Z_{i_1} Z_{i_2} ] / (d_{i_1} d_{i_2})$ are all positive to move the maximum inside the sum.
We have
\[ \ip{ \nabla \mathcal{L}( \ba_{i_1} , \bbeta )}{ \nabla \mathcal{L} (\ba_{i_2}, \bbeta) } \leq \|\nabla \mathcal{L}( \ba_{i_1} , \bbeta )\|_2 \| \nabla \mathcal{L}( \ba_{i_2} , \bbeta ) \|_2 \] 
by Cauchy-Shwarz, and thus
\[ \max_{\bbeta \in \mathcal{W}} \ip{ \nabla \mathcal{L}( \ba_{i_1} , \bbeta )}{ \nabla \mathcal{L} (\ba_{i_2}, \bbeta) } \leq \max_{i\in[n]} \max_{ \bbeta \in \mathcal{W} }\|\nabla \mathcal{L}( \ba_{i} , \bbeta )\|^2_2 \leq C^2, \]
by the assumptions of the theorem.
Thus, we may continue the derivation above as
\begin{align*}
\EE \twonorm{ \hat{ \bg}( A, \bbeta_t) }^2 &\leq
\frac{C^2}{(1-p)^2} \sum_{i_1 = 1}^m \sum_{i_2 = 1}^m \frac{ \EE[ Z_{i_1} Z_{i_2}]  }{d_{i_1}d_{i_2} }  \\
&= \frac{C^2}{(1-p)^2} \inparen{ \sum_{i=1}^m \inparen{ \frac{ d_i p(1-p) + d_i^2 (1-p)^2 }{d_i^2 } } + \sum_{i_1 \neq i_2} \inparen{ \frac{ d_{i_1}d_{i_2} p(1-p) }{n d_{i_1}d_{i_2} } + \frac{ d_{i_1} d_{i_2} (1-p)^2 }{d_{i_1}d_{i_2} } } } \\
&= C^2 \inparen{ \sum_{i=1}^m \inparen{ \frac{  p}{ (1-p) d_i}  +  1 } + \sum_{i_1 \neq i_2} \inparen{ \frac{  p }{n (1-p)} + 1 } } \\ 
&\leq  mC^2 \inparen{ \frac{p}{1-p} \cdot \frac{1}{d_{\min} } +  \frac{ (m-1) p }{n (1-p) } + m } 
\end{align*}
Plugging this estimate into Theorem~\ref{thm:sgdopt} proves Theorem~\ref{thm:conv}.

\end{proof}

\section{Related Work}\label{sec:related}

In this section we survey the related work more broadly than in the introduction.
\paragraph{Coding techniques for straggler mitigation}
Straggler workers are the bottleneck of distributed systems and mitigation of stragglers is a must \cite{DB13}. Amongst popular techniques, coding theoretic techniques are being used for straggler mitigation in different applications such as machine learning, see e.g. \cite{tandon2017gradient, ye2018communication, halbawi2017improving, ferdinand2018anytime, li2018near, yu2018lagrange, karakus2017straggler, ozfaturay2018speeding, kiani2018exploitation, chen2018draco, maity2018robust, li2016unified, li2016fundamental, yang2019secure, raviv2017gradient}, matrix multiplication, see e.g.,  \cite{KS18, lee2018speeding, BPR17, AF10, yu2017polynomial, wang2018coded, baharav2018straggler, fahim2017optimal,yu2018straggler, mallick2018rateless}, linear transforms, see e.g., \cite{dutta2017coded, yang2017computing, DCG16,wang2018fundamental}, and content download, see e.g., \cite{Emina1,JLS12,HPZR12,WJW15,LSHR17,KSS15}. In gradient-descent applications, a framework called gradient coding is studied \cite{ye2018communication, halbawi2017improving,tandon2017gradient} in which the authors present coding techniques to avoid stragglers and perform a gradient descent update at each iteration, i.e., at each iteration the master observes the gradient evaluated at the whole data matrix $A$. This framework requires the master to replicate the data redundantly to the workers. The amount of redundancy depends on the number of stragglers that the master wants to tolerate. In this work, we restrict our focus to techniques for straggler mitigation in machine learning applications and in particular to stochastic gradient-descent-type algorithms. 

\paragraph{Approximate gradient coding}
The works that are closely related to our work are \cite{horii2019distributed,chen2016revisiting, tandon2017gradient,charles2017approximate, wang2019fundamental, wang2019erasurehead, maity2018robust, halbawi2017improving,ye2018communication}. In \cite{chen2016revisiting} the authors require the workers to sample a subset of the data from the master. Each worker computes one update of the gradient and sends the result to the master. The master waits for the fastest $n-s$ workers, for a given $s<n$, and performs an update on $\bbeta$. This method is the closest to the Ignore--Stragglers--SGD algorithm that we discussed above. The difference is that in \cite{chen2016revisiting} the workers sample a different subset at each iteration, whereas in Ignore--Stragglers--SGD the workers are given a partition of the data that is fixed throughout the algorithm. In \cite{maity2018robust}, the authors focus on linear loss functions and use LDPC codes to encode the data sent to the workers. If fewer than $s$ stragglers are present, then the master can compute the exact gradient. However, if more than $s$ stragglers are present the master leverages the LDPC code to computes an estimate of the gradient. In \cite{horii2019distributed} the authors propose to distribute the data to the workers using an LDGM code. 
The main drawback is that the master has to run a decoding algorithm to decode the sum of the partial gradients at each iteration. In \cite{tandon2017gradient,charles2017approximate}, the authors present coding frameworks that trade redundancy for computing the exact gradient with high probability. The main idea is to show how far the computed gradient is from the actual gradient as a function of the redundancy factor, the distance between the actual gradient and the computed gradient is termed as error. In \cite{tandon2017gradient} the authors present a data distribution scheme based on Ramanujan graphs. In \cite{charles2017approximate} the authors present two constructions.  The first is based on fractional repetition codes (FRC) and partitions the workers and data into blocks; within a block, each worker receives every data point from the corresponding block.  The second construction called \em Bernoulli Gradient Coding \em (BGC) distributes each data point randomly to $d$ different workers.  We note that BGC is an approximation of the pairwise-balanced schemes we consider and can be seen as a case of SGC when all the data $\mathbf{a}_i$ have the same norm.
In \cite{wang2019fundamental} the authors present fundamental bounds on the error as function of the redundancy. In \cite{wang2019erasurehead}, the authors analyze the convergence rate of the fractional repetition scheme presented in \cite{charles2017approximate} and show that under standard assumptions on the loss function, the algorithm maintains the convergence rate of centralized stochastic gradient descent.

\paragraph{Other work on stochastic gradient descent}
Beginning with its introduction in \cite{RM51}, there has been a huge body of work on stochastic gradient descent (in a setting without stragglers), and we draw on this mathematical framework for our theoretical results.  In the special case of $\ell_2$ loss (which we focus on in this work), SGD coincides with the randomized Kaczmarz method~\cite{SV09,NWS14}, and our proof of Theorem~\ref{thm:convl2} is inspired by these analyses.  

There has been a great deal of work on SGD and Batch--SGD in distributed settings. We discussed some of it above; most of the remaining work focuses on schemes which do not add any redundancy between the workers, see .eg.,~\cite{CSSS11,AD11,DGSX12,SS14, NW16}. In addition to the synchronous setting in which the master waits for all workers to make an update on $\bbeta$, there has been lots of work on the asynchronous setting in which the master makes an update on $\bbeta$ every time a worker gets back, see e.g.,~\cite{gimpel2010distributed,BT89,dean2012large,dutta2018slow}. The tradeoff between synchronous and asynchronous SGD is in the time per iteration versus number of iterations till convergence. Synchronous SGD requires less number of iterations to converge, however due to the synchrony between workers, each iteration takes a longer time. On the other hand, in asynchronous SGD, every time a worker finishes computing the gradient at hand, it sends the result to the master who updates $\bbeta$ and sends its new value to this worker to start a new computation. The problem of asynchronous SGD is that each worker is operating on a different value of $\bbeta$ which may be very old compared to the value of $\bbeta$ held at the master which creates problems in the convergence. Theoretical convergence of asynchronous SGD is hard to analyze by itself, however researchers try to compare the behavior of asynchronous SGD to that of synchronous SGD. For example, \cite{CDR15} shows that asynchronous SGD asymptotically behaves similarly to synchronous SGD in terms of convergence for convex optimization and under the similar assumptions on the loss function. In \cite{dutta2018slow}, the authors compare the convergence rate of synchronous and asynchronous SGD as a function of the wall clock time rather than number of iterations. The authors show that in the beginning of the algorithm asynchronous SGD is faster than synchronous SGD, but gets slower as the algorithm evolves.

\section*{Acknowledgements} We thank Deanna Needell for helpful pointers to the literature.

\bibliographystyle{ieeetr}
\bibliography{CodedComp}

\end{document}